\newtheorem{theorem}{Theorem}[section]
\newtheorem{lemma}[theorem]{Lemma}
\newtheorem{proposition}[theorem]{Proposition}
\newtheorem{corollary}[theorem]{Corollary}
\theoremstyle{definition}
\newtheorem{definition}[theorem]{Definition}
\newtheorem{example}[theorem]{Example}
\theoremstyle{remark}
\newtheorem{remark}[theorem]{Remark}
\newcommand{\A}{\mathcal A}
\newcommand{\B}{\mathcal B}
\newcommand{\F}{\mathcal F}
\newcommand{\G}{\mathcal G}
\newcommand{\M}{\mathcal M}
\newcommand{\N}{\mathcal N}
\newcommand{\Q}{\mathcal Q}
\newcommand{\R}{\mathcal R}
\renewcommand{\S}{\mathcal S}
\newcommand{\T}{\mathcal T}
\newcommand{\V}{\mathcal V}
\newcommand{\W}{\mathcal W}
\newcommand{\X}{\mathcal X}
\newcommand{\Y}{\mathcal Y}
\newcommand{\CC}{\mathbb C}
\DeclareSymbolFont{tipa}{T3}{cmr}{m}{n}
\DeclareMathAccent{\bend}{\mathalpha}{tipa}{16}
\newcommand{\EV}{\quad \Longleftrightarrow \quad}
\newcommand{\iso}{\cong}
\newcommand{\Inc}{\mathrm{Inc}}
\newcommand{\Eval}{\mathrm{Eval}}
\newcommand{\Id}{\mathrm{Id}}
\renewcommand{\:}{\colon}
\newcommand{\cat}{\mathsf}
\newcommand{\Cl}{\mathrm{Cl}}
\newcommand{\id}{\mathrm{id}}
\newcommand{\one}{\mathbf{1}}
\newcommand{\tr}{\mathrm{tr}}
\newcommand{\boxprod}{\mathbin{\square}}
\newcommand{\At}{\mathrm{At}}
\title{Quantum graphs of homomorphisms}
\author{Andre Kornell and Bert Lindenhovius}
\date{}
\newcommand{\Addresses}{{
  \bigskip
  \footnotesize

  \noindent \textsc{Department of Mathematical Sciences, New Mexico State University}
  \par\nopagebreak
  \noindent Las Cruces, New Mexico, United States of America
  \par\nopagebreak
  \noindent \textit{E-mail address}: \texttt{kornell@nmsu.edu}

  \medskip

  \noindent \textsc{Institute for Mathematical Methods in Medicine and Data Based Modeling, Johannes Kepler University Linz}
  \par\nopagebreak
  \noindent Linz, Austria
  \par\nopagebreak
  \noindent \textit{E-mail address}: \texttt{albertus.lindenhovius@jku.at}

}}
\begin{document}

\maketitle

\begin{abstract}
We introduce a category $\cat{qGph}$ of quantum graphs, whose definition is motivated entirely from noncommutative geometry. For all quantum graphs $G$ and $H$ in $\cat{qGph}$, we then construct a quantum graph $[G,H]$ of homomorphisms from $G$ to $H$, making $\cat{qGph}$ a closed symmetric monoidal category. We prove that for all finite graphs $G$ and $H$, the quantum graph $[G,H]$ is nonempty iff the $(G,H)$-homomorphism game has a winning quantum strategy, directly generalizing the classical case.

The finite quantum graphs in $\cat{qGph}$ are tracial, real, and self-adjoint, and the morphisms between them are CP morphisms that are adjoint to a unital $*$-homomorphism. We observe that Weaver's two notions of a CP morphism coincide in this context. We also include a short proof that every finite reflexive quantum graph is the confusability quantum graph of a quantum channel.
\end{abstract}

\vspace{.5cm}

\noindent {\em Key words:
  quantum graph, synchronous game, quantum channel, quantum set, quantum relation, closed monoidal category.
}

\vspace{.5cm}

\noindent{MSC 2020:
46L89, 
05C76, 
81P40, 
81P47, 
18D15. 

\section{Introduction}\label{section 1}}

\subsection{The graph of homomorphisms $G \to H$}

In the meaning of the present article, a \emph{graph} $G$ is a pair $(V_G, e_G)$ such that $V_G$ is a set and $e_g$ is a symmetric relation on $V_G$. The set $V_G$ may be infinite or empty, and the relation $e_g$ may be neither reflexive nor irreflexive. In other words, our graphs may have any number of vertices, and loops are allowed, but multiple edges are forbidden. We write $g_1 \sim g_2$ if $(g_1, g_2) \in e_G$.

Graphs form a category $\cat{Gph}$ in the obvious way: a \emph{homomorphism} $G \to H$ is a function $\phi\:V_G \to V_H$ such that $g_1 \sim g_2$ implies $\phi(g_1) \sim \phi(g_2)$. Establishing the elementary properties of $\cat{Gph}$ is a routine exercise. The initial object is the empty graph, which we notate $K_0$, and the terminal object is a vertex with a loop, which we notate $\overline K_1$. The coproduct of two graphs $G$ and $H$ is their disjoint union $G + H$, and their product $G \times H$ is defined by $V_{G \times H} = V_G \times V_H$ with $(g_1, h_1) \sim (g_2, h_2)$ if $g_1 \sim g_2$ and $h_1 \sim h_2$.

The set $\cat{Gph}(G,H)$ of all homomorphisms $G \to H$ is itself a graph in a natural way: $\phi \sim \psi$ if $\phi(g) \sim \psi(g)$ for all $g \in V_G$. This graph structure is closely related to a different product of graphs. The \emph{box product} $G \boxprod H$ is defined by $V_{G \boxprod H} = V_G \times V_H$ with $(g_1, h_1) \sim (g_2, h_2)$ if both $g_1 = g_2$ and $h_1 \sim h_2$ or both $g_1 \sim g_2$ and $h_1 = h_2$. This relationship is described by a universal property:
\begin{equation}\label{equation 1}\tag{$\ast$}
\begin{tikzcd}
K \boxprod G
\arrow{rrrd}{\phi}
\arrow[dotted]{d}[swap]{! \times \id}
&
&
&
\\
{\cat{Gph}(G,H)} \boxprod G
\arrow{rrr}[swap]{\mathrm{eval}}
&
&
&
H
\end{tikzcd}
\end{equation}
For each graph $K$, each homomorphism $\phi\: K \boxprod G \to H$ factors uniquely through the evaluation homomorphism $\cat{Gph}(G,H) \boxprod G \to H$ as in eq.~\ref{equation 1} (cf.~\cite{zbMATH07883979}).

\subsection{The quantum graph of homomorphisms $G \to H$}

The immediate purpose of the present article is to establish a variant of this result for quantum graphs in the context of noncommutative geometry \cite{zbMATH01452515, zbMATH07856619}. A few relevant definitions of quantum graphs appear in the literature \cite{zbMATH06008057, zbMATH06936038, zbMATH07202497}. They are inequivalent but closely related, and they ultimately originate in the problem of error correction for quantum information channels \cite{zbMATH06727895}. Intuitively, quantum graph structure encodes the possibility that two states may be confused with each other, e.g., in the sense of having a nonzero transition probability after transmission through a quantum channel.

Our approach adheres to orthodox noncommutative geometry. We generalize sets to discrete quantum spaces \cite{zbMATH04152742}, i.e., quantum sets \cite{zbMATH07287276}, and generalize relations to quantum relations \cite{zbMATH07388954}, which correspond to projections in this setting \cite[Theorems~3.6~and~B.8]{zbMATH07287276}. Explicitly, in the meaning of the present article, a \emph{quantum graph} is essentially a von Neumann algebra $\M \subseteq \B(H)$ and an ultraweakly closed operator system $\R \subseteq \B(H)$ such that $\M' \cdot \R \cdot \M' \subseteq \R$ and $\R^\dagger = \R$ (cf.~\cite[Definition~2.6(d)]{zbMATH07388954}) and such that $\M \iso \bigoplus_{i \in I}^{\ell^\infty} M_{n_i}(\CC)$ for some indexed family of positive integers $\{n_i\}_{i \in I}$, which may be infinite.

We define a category $\cat{qGph}$ of quantum graphs, which contains $\cat{Gph}$ as a full subcategory. We use the generalization of the box product that is natural to this setting \cite[Definition~2.1]{arXiv:2408.11911}. For each pair of quantum graphs $G$ and $H$, we construct the quantum graph $[G, H]$ and prove \cref{D}, establishing an analogue of the universal property in eq.~\ref{equation 1}:

\begin{equation*}
\begin{tikzcd}
K \boxprod G
\arrow{rrrd}{\Phi}
\arrow[dotted]{d}[swap]{! \times \Id}
&
&
&
\\
{[G,H]} \boxprod G
\arrow{rrr}[swap]{\Eval}
&
&
&
H
\end{tikzcd}
\end{equation*}
Thus, $[G,H]$ acts as the quantum graph of all homomorphisms $G \to H$ from the perspective of noncommutative geometry.

To support this interpretation, we prove \cref{J}, providing a connection to graph homomorphism games \cite{zbMATH06555288}. Explicitly, we prove that the following conditions are equivalent for all finite simple graphs $G$ and $H$:
\begin{enumerate}
\item there is a winning quantum strategy for the $(G,H)$-homomorphism game,
\item the quantum graph $[G, H]$ is not empty.
\end{enumerate}
Because there exists a winning strategy for the $(G,H)$-homomorphism game iff there exists a homomorphism $G \to H$ \cite[section~2.1]{zbMATH06555288}, it is said that \emph{there is a quantum homomorphism} $G \to H$ if there exists a winning quantum strategy for this game \cite[section~1]{zbMATH06555288}. Thus, $[G,H]$ acts as the quantum graph of all homomorphisms $G \to H$ from the perspective of quantum information theory.

In short, the $(G,H)$-homomorphism game exhibits quantum nonlocality \cite{PhysicsPhysiqueFizika.1.195} iff the quantum graph $[G,H]$ is nonempty but the graph $\cat{Gph}(G,H)$ is empty. Man\v{c}inska and Roberson have shown that this occurs for graphs $G$ and $H$ on fourteen and four vertices, respectively \cite{oddities}. Harris has also recently shown that graph homomorphism games are universal for synchronous games \cite{zbMATH07924597}, which form a prominent subclass of nonlocal games.

\subsection{Comments}

Our construction of $[G,H]$ combines two threads in noncommutative geometry. The first of these threads is the construction of quantum function spaces that began with the spaces found by Woronowicz \cite{zbMATH03810280,zbMATH05539479} in the topological setting, i.e., in the setting of $C^*$-algebras. There, it is assumed that the domain is a finite quantum space, i.e., that the corresponding $C^*$-algebra is finite-dimensional, and that the codomain is a finite-dimensional compact quantum space, i.e., that the corresponding $C^*$-algebra is finite-dimensional and unital.

Quantum function spaces were then constructed by the first author in the setting of von Neumann algebras \cite{zbMATH06813783} and hereditarily atomic von Neumann algebras \cite{zbMATH07287276} without assumptions on the domain and the codomain. The latter construction produces a quantum set of functions $\Y^\X$ for all quantum sets $\X$ and $\Y$ because hereditarily atomic von Neumann algebras are a formalization of quantum sets (section~\ref{section 2}). This quantum set of functions was used to construct a quantum poset of monotone functions in \cite{zbMATH07605379}, and the same approach is used in the present article to construct the quantum graph $[G,H]$.

The second thread that leads to our construction of $[G,H]$ is the analysis of quantum strategies for nonlocal games. This analysis began with a criterion for the existence of winning quantum strategies \cite[Proposition~1]{zbMATH05540859}, which was then gradually understood to refer to generalized homomorphisms \cite{zbMATH06694253,zbMATH06552074,zbMATH07204369,zbMATH06936038}. The first author then observed that, from the perspective of noncommutative geometry, these quantum homomorphisms should be understood as \emph{quantum families} of homomorphisms \cite{zbMATH07287276}, anticipating \cref{J}.

Early in this development, Oritz and Paulsen characterized the existence of winning quantum strategies in terms of a universal construction \cite{zbMATH06552074}, introducing another aspect of \cref{J}. Specifically, they proved that there is a winning quantum strategy for the $(G,H)$-homomorphism game iff its $C^*$-algebra $\A(G,H)$ is nonzero \cite[Theorem~4.7(3)]{zbMATH06552074}. This algebraic approach has yielded significant progress in our understanding of nonlocal games, e.g., \cite{zbMATH07202497, zbMATH07924597}. It is natural to speculate that the atoms of the vertex quantum set $\V_{[G,H]}$ are in one-to-one correspondence with the irreducible finite-dimensional representations of $\A(G,H)$, but we do not pursue this correspondence here because it is peripheral to the major themes of this article.

The novelty of \cref{J} is that, in contrast to the bespoke $C^*$-algebra $\A(G,H)$, the quantum graph $[G,H]$ arises directly from the first principles of noncommutative geometry, quite apart from any considerations involving the graph homomorphism game. Indeed, it is uniquely determined by the natural one-to-one correspondence between homomorphisms $K \to [G,H]$ and homomorphisms $K \boxprod G \to H$ (\cref{{D}}), so in effect, it is uniquely determined by the well-established notion of a discrete quantum space \cite{zbMATH04152742}. We refer to these discrete quantum spaces as quantum sets (section~\ref{section 2}).

Furthermore, our construction of $[G,H]$ proceeds entirely in terms of these quantum sets as objects in their own right rather than as a manner of speaking about certain operator algebras. Our reasoning here is grounded in the basic properties of quantum sets, which thus serve as axioms. We refer to notions from operator algebra and category theory for context and for brevity; however, we seldom appeal directly to results from these subjects.

Finite-dimensional von Neumann algebras and trace-preserving completely positive maps appear in section~\ref{section 6}, where we describe the category $\cat{qGph}$ in terms that are more common to quantum information theory. Specifically we show that the morphisms of this category correspond to certain quantum channels that respect confusability structure in an appropriate sense \cite{zbMATH07388954}. These quantum channels are deterministic in the exotic sense that they never increase a measure of entropy that is related to but distinct from von Neumann entropy \cite{zbMATH07990859}.

\medskip

\noindent \textbf{Context.} The category $\cat{qGph}$ fits into prior literature in the following~way: Modulo the choice of formalization, an object of $\cat{qGph}$ is a quantum graph in the sense of Weaver \cite[Definition~2.6(d)]{zbMATH06008057} that need not be reflexive \cite[Definition~2.4(d)(i)]{zbMATH06008057} and that is on a hereditarily atomic von Neumann algebra \cite[Definition~5.3]{zbMATH07287276}. Therefore, the finite quantum graphs in $\cat{qGph}$ are the real, tracial, and self-adjoint (finite) quantum graphs in the terminology of \cite{zbMATH07668023}.

A morphism in $\cat{qGph}$ between two finite quantum graphs is a CP morphism \cite{zbMATH07388954} that is adjoint to a unital $\dagger$-homomorphism (\cref{R}, \cref{M}). We show that under the latter assumption, Weaver's two notions of a CP morphism coincide (\cref{O}). The stronger of these two notions generalizes Stahlke's notion of a homomorphism \cite[Definition~7]{zbMATH06709711}. Equivalently, a morphism in $\cat{qGph}$ between two finite quantum graphs is a homomorphism in the sense of Musto, Reutter, and Verdon \cite[Definition~5.4]{zbMATH06936038}; see \cite[Remark~5.9]{zbMATH06936038} and \cref{S}. These homomorphisms are related to the existence of local, i.e., classical strategies for generalizations of the graph homomorphism game, which we do not investigate in this article \cite{zbMATH07667940,arXiv:2408.15444}.

\pagebreak

\noindent \textbf{Notation.} 
When $H$ is a finite-dimensional Hilbert space, we write $L(H)$ for the space of linear operators on $H$, and when $a \in L(H)$, we write $a^\dagger$ for the Hermitian adjoint of $a$. We use this notation for the Hermitian adjoint throughout, speaking of $\dagger$-algebras and $\dagger$-homomorphisms, where others might speak of $*$-algebras and $*$-homomorphisms. When $\R$ and $\S$ are subspaces of $L(H)$, we write $\R, \S \leq L(H)$ and let $\R\cdot \S$ be the span of the set $\{rs \mid r \in \R, \, s \in \S\}$. Similarly, we write $\R^\dagger$ for the subspace $\{r^\dagger \mid r \in \R\}$. When $\cat{C}$ is a category, we write $\cat{C}(X,Y)$ for the set of all morphisms from an object $X$ to an object  $Y$.

\bigskip

\noindent \textbf{Acknowledgments.} We thank Matthew Daws for his comments. The first author was supported by the Air Force Office of Scientific Research under Award No.~FA9550-21-1-0041 and by the National Science Foundation under Award No.~DMS-2231414. The second author was supported by the Austrian Science Fund (FWF) under Project DOI 10.55776/PAT6443523.

\bigskip

\section{Quantum sets}\label{section 2}

This section provides a quick introduction to quantum sets \cite{zbMATH07287276}.

$C^*$-algebras are sometimes regarded as a quantum generalization of locally compact Hausdorff spaces because the Gelfand representation provides a duality between commutative $C^*$-algebras and locally compact Hausdorff spaces. This perspective is the bedrock of noncommutative geometry \cite[chapter~1]{zbMATH01452515}. In this conceptual framework, one imagines a category of quantum locally compact Hausdorff spaces that is dual to the category of $C^*$-algebras and Woronowicz morphisms. Formally, the former category is defined to be the opposite of the latter \cite{zbMATH03810280}.

Discrete spaces are a class of locally compact Hausdorff spaces, and quantum discrete spaces are commonly defined to correspond to $C^*$-algebras of the form $\A \iso \bigoplus_{i \in I}^{c_0} M_{n_i}(\CC)$, where the notation refers to the $c_0$-direct sum of $C^*$-algebras. This definition emerged from the theory of compact quantum groups \cite{zbMATH04152742}. Just as sets are naturally identified with discrete spaces, quantum sets may be naturally identified with discrete quantum spaces.

Formally, we define a \emph{quantum set} $\X$ to be a set of nonzero finite-dimensional Hilbert spaces, which we call the \emph{atoms} of $\X$. Intuitively, these Hilbert spaces are not the elements of $\X$, so we write $\X$ when we regard this object as a discrete quantum space and $\At(\X)$ when we regard this object as a set of Hilbert spaces. The operator algebras
$$c_0(\X) : = \bigoplus_{X \in \At(\X)}^{c_0} L(X), \qquad \qquad \ell^\infty(\X) : = \bigoplus_{X \in \At(\X)}^{\ell^\infty} L(X)$$
generalize the classical sequence spaces in the obvious way. As in the classical case, $c_0(\X)$ is always a $C^*$-algebra, and $\ell^\infty(\X)$ is always a von Neumann algebra.

It is generally preferable to work with the von Neumann algebras $\ell^\infty(\X)$ rather than the $C^*$-algebras $c_0(\X)$ for two reasons. First, functions $\X \to \Y$ correspond to Woronowicz morphisms $c_0(\Y) \to c_0(\X)$ and to unital normal $\dagger$-homomorphisms $\ell^\infty(\Y) \to \ell^\infty(\X)$, and the latter notion is technically simpler. Second, relations $\X \to \Y$ correspond to quantum relations \cite{zbMATH06008057} between $\ell^\infty(\X)$ and $\ell^\infty(\Y)$, and while we speculate that they correspond to a class of $C^*$-correspondences \cite{zbMATH01003154} between $c_0(\X)$ and $c_0(\Y)$ too, we are not aware of this connection in the literature. Up to isomorphism, a von Neumann algebra is of the form $\ell^\infty(\X)$ iff it is \emph{hereditarily atomic} \cite[Proposition~5.4]{zbMATH03810280}.

A \emph{relation} $R$ from $\X$ to $\Y$ is a choice of subspaces
$
R(X,Y) \leq L(X,Y)
$ for all $X \in \At(\X)$ and $Y \in \At(\Y)$. Quantum sets and relations form a dagger compact category $\cat{qRel}$ in a straightforward way \cite[Theorem.~3.6]{zbMATH03810280}, which may be guessed by the reader. A \emph{dagger compact category}, i.e., a strongly compact category \cite{AbramskyCoecke2004}, is a symmetric monoidal category that is further equipped with a choice of dual objects and a compatible contravariant functor $(-)^\dagger$. The category of $\cat{FinHilb}$ of finite-dimensional Hilbert spaces and linear operators and the category $\cat{Rel}$ of sets and relations are two prominent examples.

The categories $\cat{Rel}$ and $\cat{qRel}$ are similar in many respects. The former is the prototypical example of an \emph{allegory} \cite{zbMATH00045228}, and the latter behaves much like an allegory; both are \emph{dagger compact quantaloids} \cite{arXiv:2504.18266}. Hence, the morphisms of $\cat{Rel}$ and $\cat{qRel}$ are partially ordered, compatibly with the dagger compact structure.
In $\cat{Rel}$, $r \leq s$ if $r \subseteq s$, and in $\cat{qRel}$, $R \leq S$ if $R(X,Y) \leq S(X,Y)$ for all $X \in \At(\X)$ and $Y \in \At(\Y)$. A map in an allegory, such as $\cat{Rel}$, is defined to be a morphism $r\: X \to Y$ such that $r^\dagger \circ r \geq \id_X$ and $r \circ r^\dagger \leq \id_Y$ \cite[2.13]{zbMATH00045228}, and this definition is intelligible in $\cat{qRel}$ as well.

A \emph{function} $F\: \X \to \Y$ is a relation from $\X$ to $\Y$ such that $F^\dagger \circ F \geq \Id_\X$ and $F \circ F^\dagger \leq \Id_\Y$, where $\Id_\X$ and $\Id_\Y$ are identity relations. The resulting category $\cat{qSet}$ of quantum sets and functions is dual to the category of hereditarily atomic von Neumann algebras and unital normal $\dagger$-homomorphisms \cite[Theorem.~7.4]{zbMATH03810280}.

Similarly, the category $\cat{qRel}$ is equivalent to the category of hereditarily atomic von Neumann algebras and quantum relations. For any von Neumann algebras $\M$ and $\N$, a \emph{quantum relation} \cite{zbMATH06008057} from $\M \subseteq \B(H)$ to $\N \subseteq \B(K)$ is an ultraweakly closed subspace $\R \subseteq \B(H, K)$ such that $\N' \cdot \R \cdot \M' \subseteq \R$, where $\R \cdot \S$ is the ultraweakly closed span of operators of the form $rs$ for $r \in \R$ and $s \in \S$. In this incarnation of $\cat{qRel}$ \cite{arXiv:1101.1694}, each unital normal $\dagger$-homomorphism $\pi\: \N \to \M$ corresponds to the quantum relation
$$
\F = \{v \in \B(H, K) \mid b v = v \pi(b) \text{ for all } b \in \N\}.
$$

Both constructions of $\cat{qRel}$ have their virtues, and we appeal to both in this article. However, our approach is mostly agnostic with regard to the choice of construction, and  in this sense, we are working directly with quantum sets. We model $\X$ by the set $\At(\X)$ only in the proofs of Lemmas \ref{I} and \ref{W} and by the von Neumann algebra $\ell^\infty(\X)$ only in section \ref{section 6}.

\section{The symmetric monoidal category $\cat{qGph}$}\label{section 3}

In this section, we construct the symmetric monoidal category $\cat{qGph}$ within $\cat{qRel}$ and then prove that is closed. The same arguments suffice to construct the symmetric monoidal category $\cat{Gph}$ within $\cat{Rel}$ and to prove that it is closed, so this section may be viewed as a quantization of some elementary graph \emph{theory}.

\begin{definition}\label{A}
A \emph{quantum graph} is a pair $G = (\V_G, E_G)$ such that
\begin{enumerate}
\item $\V_G$ is a quantum set,
\item $E_G$ is a relation on $\V_G$ that satisfies $E_\X^\dagger = E_\X$.
\end{enumerate}
A \emph{homomorphism} $G \to H$ is a function $\Phi\: \V_G \to \V_H$ such that $$\Phi\circ E_G \leq E_H \circ \Phi.$$
The defines the category $\cat{qGph}$ of quantum graphs and their homomorphisms.
\end{definition}

It is routine to verify that $\cat{qGph}$ is a category with the obvious notion of composition. Its isomorphisms can be characterized in the following way.

\begin{lemma}\label{B}
Let $G$ and $H$ be quantum graphs, and let $\Phi:\V_G \to \V_H$ be a function. The following are equivalent:
\begin{enumerate}
\item $\Phi$ is an isomorphism $G \to H$ in $\cat{qGph}$,
\item $\Phi$ is a bijection, i.e., $\Phi^\dagger = \Phi^{-1}$, and $\Phi\circ E_G= E_H\circ \Phi$.
\end{enumerate}  
\end{lemma}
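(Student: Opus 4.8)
The plan is to argue entirely inside the dagger compact quantaloid $\cat{qRel}$, using only three things: composition of relations is monotone in each argument (the quantaloid structure), $(-)^\dagger$ is an involution with $(S \circ R)^\dagger = R^\dagger \circ S^\dagger$, and $\cat{qGph}$ inherits its morphisms and composition from $\cat{qSet} \subseteq \cat{qRel}$. Unpacking the definitions, ``$\Phi$ is an isomorphism $G \to H$ in $\cat{qGph}$'' means exactly that $\Phi$ is a homomorphism $G \to H$ and there is a homomorphism $\Psi\:\V_H \to \V_G$ with $\Psi \circ \Phi = \Id_{\V_G}$ and $\Phi \circ \Psi = \Id_{\V_H}$. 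Both implications then reduce to short order-theoretic computations that quantize the classical characterization of graph isomorphisms.

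For $(2) \Rightarrow (1)$, I would assume $\Phi$ is a bijection and first observe that $\Phi^\dagger$ is again a function, since $(\Phi^\dagger)^\dagger \circ \Phi^\dagger = \Phi \circ \Phi^\dagger = \Id_{\V_H}$ and $\Phi^\dagger \circ (\Phi^\dagger)^\dagger = \Phi^\dagger \circ \Phi = \Id_{\V_G}$. Conjugating the hypothesis $\Phi \circ E_G = E_H \circ \Phi$ on the left and on the right by $\Phi^\dagger$ and cancelling $\Phi^\dagger \circ \Phi = \Id_{\V_G}$ and $\Phi \circ \Phi^\dagger = \Id_{\V_H}$ gives $E_G \circ \Phi^\dagger = \Phi^\dagger \circ E_H$, so $\Phi^\dagger$ is a homomorphism $H \to G$. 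Since $\Phi$ is trivially a homomorphism $G \to H$ and $\Phi, \Phi^\dagger$ are mutually inverse, $\Phi$ is an isomorphism in $\cat{qGph}$ with $\Phi^{-1} = \Phi^\dagger$.

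For $(1) \Rightarrow (2)$, I would take an inverse homomorphism $\Psi$; it is in particular a function with $\Psi \circ \Phi = \Id_{\V_G}$ and $\Phi \circ \Psi = \Id_{\V_H}$. Then $\Phi^\dagger = \Phi^\dagger \circ (\Phi \circ \Psi) = (\Phi^\dagger \circ \Phi) \circ \Psi \geq \Id_{\V_G} \circ \Psi = \Psi$, using $\Phi^\dagger \circ \Phi \geq \Id_{\V_G}$ and monotonicity of composition, and dually $\Phi^\dagger = (\Psi \circ \Phi) \circ \Phi^\dagger = \Psi \circ (\Phi \circ \Phi^\dagger) \leq \Psi$, using $\Phi \circ \Phi^\dagger \leq \Id_{\V_H}$. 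Hence $\Psi = \Phi^\dagger$ and $\Phi$ is a bijection. Now $\Phi$ being a homomorphism gives $\Phi \circ E_G \leq E_H \circ \Phi$, while $\Psi = \Phi^\dagger$ being a homomorphism gives $\Phi^\dagger \circ E_H \leq E_G \circ \Phi^\dagger$; conjugating the latter inequality by $\Phi$ and cancelling as before yields $E_H \circ \Phi \leq \Phi \circ E_G$, and combining the two inequalities gives $\Phi \circ E_G = E_H \circ \Phi$.

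I do not expect a substantial obstacle: the whole content is the interplay between the function axioms $\Phi^\dagger \circ \Phi \geq \Id$ and $\Phi \circ \Phi^\dagger \leq \Id$ on the one hand and monotonicity of composition in the quantaloid $\cat{qRel}$ on the other, exactly as in $\cat{Rel}$. The one point worth stating carefully is that $\cat{qGph}$ is constructed over $\cat{qRel}$ so that all these order manipulations are legitimate; note that the symmetry axiom $E_G^\dagger = E_G$ for quantum graphs is not needed for this particular lemma.
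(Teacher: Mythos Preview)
Your proof is correct and essentially the same as the paper's. Both directions use the same ingredients: the function inequalities $\Phi^\dagger\circ\Phi\geq\Id$ and $\Phi\circ\Phi^\dagger\leq\Id$, monotonicity of composition, and the homomorphism inequalities for $\Phi$ and its inverse. The only cosmetic difference is that in $(1)\Rightarrow(2)$ the paper derives $E_H\circ\Phi\leq\Phi\circ E_G$ directly from $\Psi\circ E_H\leq E_G\circ\Psi$ by writing $E_H\circ\Phi=\Phi\circ\Psi\circ E_H\circ\Phi\leq\Phi\circ E_G\circ\Psi\circ\Phi=\Phi\circ E_G$, whereas you first identify $\Psi=\Phi^\dagger$ and then conjugate; the computations are the same up to reordering.
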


\begin{proof}
Assume that $\Phi\: G \to H$ is an isomorphism in $\cat{qGph}$, and let $\Psi\: H\to G$ be its inverse. We immediately have that $\Phi\circ E_G\leq E_H\circ \Phi$ and that $\Psi\circ E_H\leq E_G\circ \Psi$. Thus, $E_H\circ \Phi=\Phi\circ \Psi\circ E_H\circ \Phi\leq \Phi\circ E_G\circ \Psi\circ \Phi=\Phi\circ E_G$. We conclude that $\Phi\circ E_G=E_H\circ \Phi$. The function $\Phi$ is a bijection because $\Phi^\dagger = \Phi^\dagger \circ \Phi \circ \Psi \geq \Psi$ and $\Phi^\dagger = \Psi \circ \Phi \circ \Phi^\dagger \leq \Psi$.

Conversely, assume that $\Phi:\X\to \Y$ is a bijection such that $\Phi\circ E_G=E_H\circ \Phi$. We immediately have that $\Phi$ is a homomorphism. Furthermore, $\Phi^\dag:{\Y}\to \X$ is a function such that $\Phi^\dag\circ E_H=\Phi^\dag\circ E_H\circ \Phi\circ \Phi^\dag=\Phi^\dag\circ \Phi\circ E_G\circ \Phi^\dag=E_G\circ \Phi^\dag$, and hence, $\Phi^\dag$ is a homomorphism $(\Y,E_H)\to( \X,E_G)$. The functions $\Phi$ and $\Phi^\dagger$ are inverses because $\Phi$ is a bijection; we conclude that $\Phi$ is an isomorphism.
\end{proof}

We now use \cite[Definition~2.1]{arXiv:2408.11911} to make $\cat{qGph}$ a symmetric monoidal category. Recall that $\mathbf 1$ and $- \times -$ are the monoidal unit and product of $\cat{qRel}$, respectively.

\begin{definition}
We define a symmetric monoidal structure $\cat{qGph}$:
\begin{enumerate}
\item the monoidal unit is the quantum graph $K_1 = (\mathbf 1, 0_{\mathbf 1})$,
\item the monoidal product of quantum graphs $G$ and $H$ is
$$
G \boxprod H = ( \V_G \times \V_H, E_G \boxprod E_H),
$$
where $E_G \boxprod E_H = (E_G \times \Id_H) \vee (\Id_G \times E_H)$,
\item the monoidal product of homomorphisms $\Phi$ and $\Psi$ is simply $\Phi \times \Psi$.
\end{enumerate}
\end{definition}

To show that $G \boxprod H$ is a quantum graph, we reason that
\begin{align*}
(E_G & \boxprod  E_H)^\dag = ((E_G \times \Id_H) \vee (\Id_G \times E_H))^\dag = (E_G \times \Id_H)^\dag \vee (\Id_G \times E_H)^\dag \\ & = (E_G^\dag \times \Id_H) \vee (\Id_G \times E_H^\dag) = (E_G \times \Id_H) \vee (\Id_G \times E_H) = E_G \boxprod E_H.
\end{align*}
To show that $\Phi \times \Psi$ is a homomorphism, for homomorphisms $\Phi:G \to H$ and $\Psi: K \to L$, we reason that 
\begin{align*}
(\Phi\times \Psi)\circ (E_G\boxprod E_K) & = (\Phi\times \Psi)\circ \big((E_G\times \Id_K)\vee(\Id_G\times E_K)\big)\\
& = \big((\Phi\times \Psi)\circ (E_G\times  \Id_K)\big)\vee \big((\Phi\times \Psi)\circ (\Id_G\times E_K)\big)\\
& = \big((\Phi\circ E_G)\times(\Psi\circ \Id_K)\big) \vee \big((\Phi\circ \Id_G)\times(\Psi\circ E_K)\big)\\
& \leq \big((E_H\circ \Phi)\times (\Id_L\circ \Psi)\big)\vee \big((\Id_H\circ \Phi)\times(E_L\circ \Psi)\big)\\
& = \big((E_H\times \Id_L)\circ (\Phi\times \Psi)\big)\vee\big((\Id_H\times E_L)\circ (\Phi\times \Psi)\big)\\
& = \big((E_H\times \Id_L)\vee (\Id_H\times E_L)\big)\circ (\Phi\times \Psi)\\
& = (E_H\boxprod E_L)\circ (\Phi\times \Psi).
\end{align*}
It is routine to verify that the associator, braiding, and unitors of $\cat{qRel}$ are also the associator, braiding, and unitors for the monoidal unit and monoidal product of $\cat{qGph}$. For example, for each quantum graph $G$, the right unitor component $R:\V_G\times \mathbf 1\to \V_G$ satisfies
\begin{align*}&
R \circ (E_G \boxprod 0_{\mathbf 1})
=
R \circ ((E_G \times \Id_\one) \vee (\Id_\X \times 0_\one)) = R \circ (E_G \times \Id_\one) = E_G \circ R,
\end{align*}
by the naturality of the right unitor in $\cat{qRel}$.

The category $\cat{qSet}$ is symmetric monoidal closed \cite[section~9]{zbMATH07287276}. Thus, for all quantum sets $\X$ and $\Y$, we have a \emph{quantum function set} $\Y^\X$ and a function $\Eval_{\X,\Y}\: \Y^\X \times \X \to \Y$ that is universal among such functions. Explicitly, for all functions $F\: \W \times \X \to \Y$, there exists a unique function $\W \to \Y^\X$ that makes the following diagram commute.
\[
\begin{tikzcd}
\W \times \X
\arrow{rrrd}{F}
\arrow[dotted]{d}[swap]{! \times \Id_\X}
&
&
&
\\
\Y^\X \times \X
\arrow{rrr}[swap]{\Eval_{\X,\Y}}
&
&
&
\Y
\end{tikzcd}
\]

We now show that the symmetric monoidal category $\cat{qGph}$ is closed as well. The idea is to construct a quantum graph of homomorphisms $G \to H$, whose vertex quantum set is a subset of the quantum function set $(\V_H)^{\V_G}$. To do so, we briefly recall that a \emph{subset} of a quantum set $\X$ is a simply a subobject of $\X$ in $\cat{qSet}$, i.e., a quantum set $\W$ and a monomorphism $J_\W\: \W \to \X$, modulo the obvious notion of equivalence. Concretely, the subsets of $\X$ are the quantum sets $\W$ with $\At(\W) \subseteq \At(\X)$, and $J_\W$ is then defined by $J_\W(W,W) = \mathrm{span}(1_W)$.

\begin{definition}\label{C}
Let $G$ and $H$ be quantum graphs. We define the quantum graph $[G,H]$:
\begin{enumerate}[ref = \thelemma(\arabic*)]
\item $\V_{[G,H]}$ is the largest subset $\W$ of the quantum function set $(\V_H)^{\V_G}$ such that $ \Eval_{G,H}\circ (J_{\W} \times \Id_G)$ is a homomorphism $(\W \times \V_G, \Id_\W \times E_G) \to H$.\label[definition]{C1}
\item $E_{[G,H]}$ is the largest symmetric relation $Q$ on $\W = \V_{[G,H]}$ such that $\Eval_{G,H}\circ (J_{\W} \times \Id_G)$ is a homomorphism $(\W \times \V_G, Q \times \Id_G) \to H$.\label[definition]{C2}
\end{enumerate}
\end{definition}

It is not immediately obvious that there exist a largest subset $\W$ and a largest symmetric relation $Q$ in \cref{C}. The existence of a largest subset $\W$ such that $\Eval_{G,H}\circ (J_{\W} \times \Id_G)$ is a homomorphism follows by \cite[Lemma 8.2]{zbMATH07605379}, where the idea is that the set of all such subsets $\W$ is closed under unions. Similarly, the existence of a largest relation $Q$ such that $\Eval_{G,H}\circ (J_{\W} \times \Id_G)$ is a homomorphism follows as in the proof of the same \cite[Lemma 8.2]{zbMATH07605379}, where the idea is that the set of all such relations $Q$ is closed under joins. This largest relation $Q$ is symmetric because, writing $\Phi = \Eval_{G,H} \circ (J_\W \times \Id_G)$, we calculate that
\begin{align*}
\Phi \circ (Q^\dagger \times \Id_G) & \leq \Phi \circ (Q^\dagger \times \Id_G) \circ \Phi^\dagger \circ \Phi = (\Phi \circ (Q \times \Id_G) \circ \Phi^\dagger)^\dagger \circ \Phi \\ &  \leq (E_H \circ \Phi \circ \Phi^\dagger)^\dagger \circ \Phi
\leq E_H^\dagger \circ \Phi = E_H \circ \Phi,
\end{align*}
which implies that $Q^\dagger \leq Q$ and, hence, that $Q^\dagger = Q$.

\begin{theorem}\label{D}
Let $G$ and $H$ be quantum graphs, and let $$\Eval_{[G,H]} = \Eval_{G,H}\circ (J_{[G,H]} \times \Id_G).$$
The function $\Eval_{[G,H]}$ is a homomorphism $[G,H] \boxprod G \to H$. Furthermore, for all quantum graphs $K$ and all homomorphisms, $\Phi\: K \boxprod G \to H$, there exists a unique homomorphism $\Psi\: K \to [G, H]$ such that $\Eval_{[G,H]} \circ (\Psi \times \Id_G) = \Phi$.
\[
\begin{tikzcd}
K \boxprod G
\arrow{rrrd}{\Phi}
\arrow[dotted]{d}[swap]{\Psi \times \Id_G}
&
&
&
\\
{[G,H]} \boxprod G
\arrow{rrr}[swap]{\Eval_{[G,H]}}
&
&
&
H
\end{tikzcd}
\]
Thus, the symmetric monoidal category $\cat{qGph}$ is closed \cite[section~VII.7]{zbMATH01216133}.
\end{theorem}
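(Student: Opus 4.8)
The plan is to leverage the universal property of the quantum function set $(\V_H)^{\V_G}$ in $\cat{qSet}$ together with the two maximality properties in \cref{C}. The argument has three parts: showing $\Eval_{[G,H]}$ is a homomorphism, showing existence of the factoring homomorphism $\Psi$, and showing uniqueness.

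First, that $\Eval_{[G,H]}$ is a homomorphism $[G,H] \boxprod G \to H$ means $\Eval_{[G,H]} \circ (E_{[G,H]} \boxprod E_G) \leq E_H \circ \Eval_{[G,H]}$. Expanding $E_{[G,H]} \boxprod E_G = (E_{[G,H]} \times \Id_G) \vee (\Id_{[G,H]} \times E_G)$ and distributing composition over joins, it suffices to check the two summands separately. The summand with $\Id_{[G,H]} \times E_G$ is handled by \cref{C1}, which says precisely that $\Eval_{G,H}\circ (J_{\V_{[G,H]}} \times \Id_G)$ is a homomorphism $(\V_{[G,H]} \times \V_G, \Id_{\V_{[G,H]}} \times E_G) \to H$; the summand with $E_{[G,H]} \times \Id_G$ is handled by \cref{C2}, which says the same map is a homomorphism $(\V_{[G,H]} \times \V_G, E_{[G,H]} \times \Id_G) \to H$. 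Joining the two inequalities gives the claim.

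For existence: given a homomorphism $\Phi\: K \boxprod G \to H$, in particular $\Phi$ is a function $\V_K \times \V_G \to \V_H$, so by the universal property of $(\V_H)^{\V_G}$ in $\cat{qSet}$ there is a unique function $\Psi_0\: \V_K \to (\V_H)^{\V_G}$ with $\Eval_{G,H}\circ(\Psi_0 \times \Id_G) = \Phi$. I must show (i) $\Psi_0$ factors through the subset $\V_{[G,H]}$, i.e.\ through $J_{[G,H]}$, yielding a function $\Psi\: \V_K \to \V_{[G,H]}$, and (ii) $\Psi$ is a homomorphism $K \to [G,H]$. For (i), pull back the subset $\V_{[G,H]}$ along $\Psi_0$; using that $\Phi$ restricted to $(\V_K \times \V_G, \Id_{\V_K} \times E_G)$ is a homomorphism to $H$ (one of the two inequalities packaged in $\Phi$ being a homomorphism of $K \boxprod G$), one checks that the image of $\Psi_0$ lands in a subset over which $\Eval$ is a homomorphism in the sense of \cref{C1}, hence inside the \emph{largest} such subset $\V_{[G,H]}$; this is where the maximality in \cref{C1} is used, together with closure of the relevant class of subsets under the operations used (as in \cite[Lemma 8.2]{zbMATH07605379}). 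For (ii), I need $\Psi \circ E_K \leq E_{[G,H]} \circ \Psi$. Since $E_{[G,H]}$ is the largest symmetric relation $Q$ with $\Eval_{[G,H]}$ a homomorphism for $Q \times \Id_G$, it suffices to show that $\Psi \circ E_K \circ \Psi^\dagger$ (or rather the relation it generates, made symmetric) is such a $Q$; this reduces to the remaining inequality in "$\Phi$ is a homomorphism of $K \boxprod G$", namely the one involving $E_K \times \Id_G$, transported through the identity $\Eval_{G,H}\circ(\Psi_0 \times \Id_G) = \Phi$ and the bijection-free manipulations of functions in $\cat{qRel}$ (using $\Psi^\dagger \circ \Psi \geq \Id$ to absorb and reinsert $\Psi$). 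Again maximality of $E_{[G,H]}$ closes the argument.

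For uniqueness: if $\Psi, \Psi'\: K \to [G,H]$ both satisfy $\Eval_{[G,H]}\circ(\Psi\times\Id_G) = \Phi = \Eval_{[G,H]}\circ(\Psi'\times\Id_G)$, then composing with $J_{[G,H]}\times\Id_G$ and using $\Eval_{[G,H]} = \Eval_{G,H}\circ(J_{[G,H]}\times\Id_G)$ shows $J_{[G,H]}\circ\Psi$ and $J_{[G,H]}\circ\Psi'$ both factor $\Phi$ through $\Eval_{G,H}$ in $\cat{qSet}$; the uniqueness clause of the universal property of $(\V_H)^{\V_G}$ forces $J_{[G,H]}\circ\Psi = J_{[G,H]}\circ\Psi'$, and since $J_{[G,H]}$ is a monomorphism, $\Psi = \Psi'$. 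The final sentence, that $\cat{qGph}$ is closed, is then immediate from the definition of a closed symmetric monoidal category \cite[section~VII.7]{zbMATH01216133}, once one observes the assignment is functorial in the usual way. I expect the main obstacle to be part (ii) of the existence step — verifying that the relation built from $\Psi$ and $E_K$ genuinely satisfies the homomorphism condition defining the largest $Q$, since this requires carefully threading the $E_K \times \Id_G$ inequality through the non-invertible function $\Psi$ using only the $\cat{qRel}$ inequalities $\Psi^\dagger\circ\Psi \geq \Id$ and $\Psi\circ\Psi^\dagger \leq \Id$, in the style of the symmetry computation already carried out after \cref{C}.
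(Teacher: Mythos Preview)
Your proposal is correct and follows essentially the same approach as the paper's proof: verify $\Eval_{[G,H]}$ is a homomorphism by joining the two inequalities from \cref{C1} and \cref{C2}; obtain $\Psi_0$ (the paper calls it $F$) from the universal property of $(\V_H)^{\V_G}$; show its image lies in $\V_{[G,H]}$ via \cref{C1}; show $\Psi\circ E_K\circ\Psi^\dagger\le E_{[G,H]}$ via \cref{C2}; and deduce uniqueness from the universal property of $\Eval_{G,H}$ together with $J_{[G,H]}$ being monic. The paper makes your ``image of $\Psi_0$'' step explicit by taking the surjection--injection factorization $F=J_\X\circ\overline F$ and using $\overline F\circ\overline F^\dagger=\Id_\X$ to transport the $\Id\times E_G$ inequality to $\X$, which is precisely the manipulation you anticipate needing in part~(ii).
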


\begin{proof}
By \cref{C1}, we have that $$\Eval_{[G,H]} \circ (\Id_{[G,H]} \times E_G) \leq E_H \circ \Eval_{[G,H]},$$
and by \cref{C2}, we have that
$$
\Eval_{[G,H]} \circ (E_{[G,H]} \times \Id_G) \leq E_H \circ \Eval_{[G,H]},
$$
so overall we have that $\Eval_{[G,H]} \circ (E_{[G,H]} \boxprod E_G) \leq E_H \circ \Eval_{[G,H]}$. Therefore, $\Eval_{[G,H]}$ is a homomorphism $[G,H] \boxprod G \to H$.

Let $K$ be a quantum graph, and let $\Phi\: K \boxprod G \to H$ be a homomorphism. In particular, $\Phi$ is a function $\V_K \times \V_G \to \V_H$, and since $\cat{qSet}$ is symmetric monoidal closed, there exists a unique function $F\: \V_K \to (\V_H)^{\V_G}$ such that $\Eval_{G,H} \circ (F \times \Id_G) = \Phi$.
\[
\begin{tikzcd}
\V_K \times \V_G
\arrow{rrrd}{\Phi}
\arrow[dotted]{d}[swap]{F \times \Id_G}
&
&
&
\\
(\V_H)^{\V_G} \times \V_G
\arrow{rrr}[swap]{\Eval_{G, H}}
&
&
&
\V_H
\end{tikzcd}
\]
We factor $F$ through a subset $\X$ of $(\V_H)^{\V_G},$
\[
\begin{tikzcd}
\V_K
\arrow{rr}{\overline F}
\arrow[bend right = 10]{rrrr}[swap]{F}
&
&
\X
\arrow{rr}{J_\X}
&
&
(\V_H)^{\V_G}
\end{tikzcd}
\]
such that $\overline F$ is a surjection, i.e., such that $\overline F \circ \overline F^\dagger = \Id_\X$ \cite[Definition~3.2]{zbMATH07605379}.

We now calculate that
\begin{align*}
\Eval_{G,H} \circ {} & (J_\X\times \Id_G) \circ (\Id_\X \times E_G)
\\ & =
\Eval_{G,H} \circ (J_\X\times \Id_G) \circ  (\overline F\times \Id_G) \circ (\overline F^\dagger \times \Id_G) \circ  (\Id_\X \times E_G)
\\ &=
\Eval_{G,H} \circ  ( F \times \Id_G)  \circ  (\Id_K \times E_G)\circ (\overline F^\dagger \times \Id_G)
\\ &=
\Phi \circ  (\Id_K \times E_G)\circ (\overline F^\dagger \times \Id_G)
\\ &\leq
\Phi \circ  (E_K \boxprod E_G)\circ (\overline F^\dagger \times \Id_G)
\\ &\leq
E_H \circ \Phi \circ (\overline F^\dagger \times \Id_G)
\\ &=
E_H \circ \Eval_{G,H} \circ  ( F \times \Id_G) \circ (\overline F^\dagger \times \Id_G)
\\ & =
E_H \circ \Eval_{G,H} \circ (J_\X\times \Id_G) \circ  (\overline F \times \Id_G) \circ (\overline F^\dagger \times \Id_G)
\\ &=
E_H \circ \Eval_{G,H} \circ (J_\X\times \Id_G),
\end{align*}
where the first inequality follows by definition of the box product and the second inequality from the fact that $\Phi$ is a homomorphism.
Thus, $\Eval_{G,H} \circ (J_\X \times \Id_G)$ is a homomorphism from $(\X \times \V_G, \Id_\X \times E_G)$ to $H$. The maximality of $\V_{[G,H]}$ in \cref{C1} now implies that $\X$ is a subset of $\V_{[G,H]}$. Let $\Psi = J_\X^{[G,H]} \circ \overline F$, where $J_\X^{[G,H]} \: \X \to \V_{[G,H]}$ is the inclusion function, and note that $J_{[G,H]} \circ \Psi = J_{[G,H]} \circ J_\X^{[G,H]}  \circ \overline F = J_\X \circ \overline F = F$.

We show that $\Psi$ is a homomorphism $K \to [G,H]$ by combining two inequalities. First, we calculate that
\begin{align*}
    \Eval_{G,H}\circ {} & (J_{[G,H]}\times \Id_G)\circ((\Psi\circ E_K\circ \Psi^\dag)\times \Id_G)\\
    & =   \Eval_{G,H}\circ ((J_{[G,H]}\circ \Psi\circ E_K\circ \Psi^\dag)\times \Id_G) \\
    & =    \Eval_{G,H}\circ ((F\circ E_K\circ \Psi^\dag)\times \Id_G)\\
    & =  \Eval_{G,H}\circ (F\times \Id_G)\circ (E_K\times \Id_G)\circ (\Psi^\dag\times \Id_G)\\
    & =  \Phi\circ (E_K\times \Id_\X)\circ (\Psi^\dag\times \Id_G)\\
    & \leq \Phi\circ (E_K\square E_G)\circ (\Psi^\dag\times \Id_G)\\
    & \leq E_H\circ \Phi\circ (\Psi^\dag\times \Id_G)\\
    & = E_H\circ \Eval_{G,H}\circ (F\times \Id_G)\circ (\Psi^\dag\times \Id_G)\\
    & = E_H\circ \Eval_{G,H}\circ ((J_{[G,H]}\circ \Psi\circ \Psi^\dag)\times \Id_G)\\
    & \leq E_H\circ \Eval_{G,H}\circ (J_{[G,H]}\times \Id_G),
\end{align*}
where we have the first inequality by the definition of the box product, the second inequality by the definition of a homomorphism, and the last inequality by the definition of a function. Second, we observe that 
\begin{align*} &
    \Eval_{G,H}\circ (J_{[G,H]}\times \Id_G)\circ (\Id_{[G,H]}\times E_G)
    \leq E_H\circ \Eval_{G,H}\circ (J_{[G,H]}\times \Id_G)
\end{align*}
by \cref{C1}, which is the definition of $\V_{[G,H]}$.
Since the composition of relations  preserves joins, we find that
\begin{align*}
    \Eval_{G,H} {} & \circ  (J_{[G,H]}\times \Id_G) \circ\big((\Psi\circ E_K\circ \Psi^\dag)\square E_G\big)\\
    & =    \Eval_{G,H}\circ (J_{[G,H]}\times \Id_G)\circ\big(((\Psi\circ E_K\circ \Psi^\dag)\times \Id_G)\vee (\Id_{[G,H]}\times E_G)\big)\\
     & = \big(\Eval_{G,H}\circ ((J_{[G,H]}\circ \Psi\circ E_K\circ \Psi^\dag)\times \Id_G)\big) \\ & \qquad \vee\big(\Eval_{G,H}\circ (J_{[G,H]}\times \Id_G)\circ(\Id_{[G,H]}\times E_G)\big)\\
    & \leq E_H\circ \Eval_{G,H}\circ (J_{[G,H]}\times \Id_G).
\end{align*}
Thus, $\Eval_{G,H} \circ  (J_{[G,H]}\times \Id_G)$ is a homomorphism from $(\V_{[G,H]}, \Psi \circ E_K \circ \Psi^\dagger)\boxprod G$ to $H$. It follows that \[\Psi \circ E_K \circ \Psi^\dagger \leq E_{[G,H]}\]
by \cref{C2}, which is the definition of $E_{[G,H]}$. We conclude that
\[
\Psi \circ E_K  = \Psi \circ E_K \circ \Id_K \leq \Psi \circ E_K \circ \Psi^\dagger \circ \Psi \leq E_{[G,H]} \circ \Psi.
\]
Therefore, $\Psi$ is a homomorphism $K \to [G, H]$.

We certainly have that $\Eval_{[G,H]} \circ (\Psi \times \Id_G) = \Phi$ because
\begin{align*}
\Eval_{[G,H]} \circ {} & (\Psi \times \Id_G) = 
\Eval_{G,H} \circ (J_{[G,H]} \times \Id_G) \circ (\Psi \times \Id_G) \\ & = \Eval_{G,H} \circ ((J_{[G,H]} \circ \Psi) \times \Id_G) \\ & = \Eval_{G,H} \circ (F \times \Id_G) = \Phi.
\end{align*}
Furthermore, $\Psi$ is the unique homomorphism $K \to [G,H]$ with this property because any such homomorphism $\Psi'$ we may reason that
\begin{align*}
\Eval_{[G,H]} \circ (\Psi \times \Id_G) & = \Eval_{[G,H]} \circ (\Psi' \times \Id_G), \\
\Eval_{G,H} \circ (J_{[G,H]} \times \Id_G) \circ (\Psi \times \Id_G) & = \Eval_{G,H} \circ (J_{[G,H]} \times \Id_G) \circ (\Psi' \times \Id_G) \\
\Eval_{G,H} \circ ((J_{[G,H]} \circ \Psi) \times \Id_G) & = \Eval_{G,H} \circ ((J_{[G,H]} \circ \Psi') \times \Id_G),\\
J_{[G,H]} \circ \Psi & = J_{[G,H]} \circ \Psi', \\
\Psi & = \Psi'.
\end{align*}
We have used the universal property of $\Eval_{G,H}$ and that $J_{[G,H]}$ is monic in $\cat{qSet}$. Therefore, $\Psi$ is the unique homomorphism such that $\Eval_{[G,H]} \circ (\Psi \times \Id_G) = \Phi$, and more generally, $\cat{qGph}$ is monoidal closed.
\end{proof}

The conclusion that a symmetric monoidal category is closed has some well-known consequences, e.g. \cite[Proposition~6.1.7]{zbMATH00692330}, which we record as a corollary.

\begin{corollary}
The construction in \cref{C} extends  to a functor
$$
[-,-]\: \cat{qGph}^{op} \times \cat{qGph} \to \cat{qGph}
$$
that maps colimits to limits in the first argument and preserves limits in the second argument. Similarly, the functor
$$
- \boxprod - \: \cat{qGph} \times \cat{qGph} \to \cat{qGph}
$$
preserves colimits in each argument.
\end{corollary}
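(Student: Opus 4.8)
The plan is to derive the corollary from the general theory of closed symmetric monoidal categories, with \cref{D} supplying the required parametrized adjunction; this is exactly the content of the cited \cite[Proposition~6.1.7]{zbMATH00692330}, so one option is to invoke that result directly. I would, however, reproduce the short categorical argument, since it is entirely formal. First I would record the functoriality of $[-,-]$. Given homomorphisms $f\: G' \to G$ and $g\: H \to H'$, the composite $g \circ \Eval_{[G,H]} \circ (\Id_{[G,H]} \boxprod f)$ is a homomorphism $[G,H] \boxprod G' \to H'$, so by the universal property in \cref{D} there is a unique homomorphism $[f,g]\: [G,H] \to [G',H']$ through which it factors. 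Uniqueness immediately yields $[\id,\id] = \id$ and $[f' \circ f, g' \circ g] = [f,g'] \circ [f',g]$, and the same uniqueness argument gives naturality of $\Eval$ in all variables; hence $[-,-]\: \cat{qGph}^{op} \times \cat{qGph} \to \cat{qGph}$ is a bifunctor. (Equivalently, one appeals to the standard fact that a family of adjunctions natural in a parameter assembles into a bifunctor.)

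Next I would identify the adjunctions. For each fixed quantum graph $G$, the bijection $\cat{qGph}(K \boxprod G, H) \cong \cat{qGph}(K, [G,H])$ of \cref{D}, natural in $K$ and $H$, exhibits $- \boxprod G$ as left adjoint to $[G,-]$. Consequently $[G,-]$ preserves all limits that exist in $\cat{qGph}$, and $- \boxprod G$ preserves all colimits; since the braiding of $\cat{qGph}$ furnishes a natural isomorphism $G \boxprod - \cong - \boxprod G$, the functor $G \boxprod -$ preserves colimits as well, so $-\boxprod-$ preserves colimits in each argument separately. For the behaviour of $[-,-]$ in the first argument, I would compose the adjunction isomorphism with the braiding to obtain, for each fixed $H$, a bijection
\[
\cat{qGph}(G, [K,H]) \cong \cat{qGph}(G \boxprod K, H) \cong \cat{qGph}(K \boxprod G, H) \cong \cat{qGph}(K, [G,H])
\]
natural in $G$ and $K$. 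Writing $T = [-,H]\: \cat{qGph}^{op} \to \cat{qGph}$ and $S\: \cat{qGph} \to \cat{qGph}^{op}$ for the same assignment regarded with reversed target, this bijection reads $\cat{qGph}^{op}(S(G), K) \cong \cat{qGph}(G, T(K))$, i.e.\ $S \dashv T$. Hence $S$ preserves colimits, which, since colimits in $\cat{qGph}^{op}$ are limits in $\cat{qGph}$, is precisely the statement that $[-,H]$ carries colimits in $\cat{qGph}$ to limits in $\cat{qGph}$. Combined with the preservation of limits by $[G,-]$, this gives all the claimed properties.

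There is no serious obstacle here; the argument is bookkeeping with adjoints. The one point that needs care is the variance in the ``self-adjointness'' of $[-,H]$ and the consequent identification of colimits in $\cat{qGph}^{op}$ with limits in $\cat{qGph}$, together with the observation that every $(\text{co})$limit-preservation statement above is to be read as ``preserves whatever $(\text{co})$limits happen to exist,'' so that no completeness or cocompleteness of $\cat{qGph}$ — which we have not established — is required.
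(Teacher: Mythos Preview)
Your proposal is correct and matches the paper's approach exactly: the paper gives no proof at all, simply recording the corollary as a well-known consequence of closedness and citing \cite[Proposition~6.1.7]{zbMATH00692330}, which is precisely the reference you identify and whose content you spell out. Your elaboration of the formal argument (functoriality via the universal property, the adjunctions $-\boxprod G \dashv [G,-]$, and the self-adjointness of $[-,H]$ up to variance) is accurate and more detailed than anything the paper provides.
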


It is routine to verify that we can construct a coproduct $G + H$ of quantum graphs $G$ and $H$ by $\V_{G +H} = \V_G + \V_H$ and $E_{G+H} = E_G + E_H$, where $\X + \Y$ denotes the coproduct of $\X$ and $\Y$ in $\cat{qSet}$ \cite[Remark~3.7]{zbMATH07287276}. Thus, we infer that
$$
G \boxprod (H_1 + H_2) \iso G \boxprod H_1 + G \boxprod H_2, \quad (G_1 + G_2) \boxprod H \iso G_1 \boxprod H + G_2 \boxprod H.
$$

\begin{remark}\label{E}
The category $\cat{Gph}$ of graphs and homomorphisms is similarly a closed symmetric monoidal category with the box product. To some extent, this is folklore, and in any case, this is routine to verify. However, we remark, that the arguments in this section are valid in $\cat{Rel}$ as in $\cat{qRel}$. Thus, it also serves to establish that $\cat{Gph}$ is a closed symmetric monoidal category.
\end{remark}

\section{The enrichment of $\cat{qGph}$ over $\cat{Gph}$}\label{section 4}

Adjunctions can be characterized in several ways. In particular, \cref{D} implies that there is a natural bijection
$$
\cat{qGph}(K \boxprod G, H) \iso \cat{qGph}(K, [G, H]).
$$
In this section, we observe that both sides of this bijection carry a canonical graph structure and that it is a natural isomorphism of graphs.

The symmetric monoidal category $\cat{qGph}$ is canonically \emph{enriched} over the symmetric monoidal category $\cat{Gph}$. This means that the hom sets $\cat{qGph}(G, H)$ can all be made into graphs in a way that is compatible with all the structure of $\cat{qGph}$ as a symmetric monoidal category.

Explicitly, we define two homomorphisms $\Phi_1, \Phi_2 \: G \to H$ to be adjacent if $\Phi_1^\dagger \circ E_{H} \circ \Phi_2 \geq \Id_G$. This definition makes the composition a homomorphism
$$
\circ\: \cat{qGph}(G, G') \boxprod \cat{qGph}(G',K) \to \cat{qGph}(G, G'')
$$
because $\Phi_1 \sim \Phi_2$ implies $\Psi \circ \Phi_1 \sim \Psi \circ \Phi_2$ and $\Psi_1 \sim \Psi_2$ implies $\Psi_1 \circ \Phi \sim \Psi_2 \circ \Phi$. Indeed, for all $\Phi_1, \Phi_2 \: G \to G'$ and all $\Psi\: G' \to G''$, if $ \Phi_1^\dagger \circ E_{G'} \circ \Phi_2 \geq \Id_G$, then
\begin{align*}
(\Psi \circ \Phi_1)^\dagger \circ E_{G''} \circ (\Psi \circ \Phi_2)
& =
\Phi_1^\dagger \circ \Psi^\dagger \circ E_{G''} \circ \Psi \circ \Phi_2
\geq
\Phi_1^\dagger \circ \Psi^\dagger \circ \Psi \circ E_{G'} \circ \Phi_2
\\ & \geq
\Phi_1^\dagger \circ \Id_{G'} \circ E_{G'} \circ \Phi_2
=
\Phi_1^\dagger \circ E_{G'} \circ \Phi_2
\geq
\Id_G.
\end{align*}
The proof that $\Psi_1 \sim \Psi_2$ implies $\Psi_1 \circ \Phi \sim \Psi_2 \circ \Phi$ is entirely similar. Furthermore, the function $\{\ast\} \to \cat{qGph}(G,G)$ whose value is the identity homomorphism is a automatically a homomorphism $K_1 \to \cat{qGph}(G,G)$, where $K_1$ is the graph with one vertex and no edges. Thus, $\cat{qGph}$ is enriched over $\cat{Gph}$ as a category.

The verification that $\cat{qGph}$ is enriched over $\cat{Gph}$ as a symmetric monoidal category amounts to more of the same. We simply observe that the monoidal product of morphisms is a homomorphism
$$
\boxprod\: \cat{qGph}(G, G') \boxprod \cat{qGph}(H, H') \to \cat{qGph} (G \boxprod G', H \boxprod H')
$$
because $\Phi_1 \sim \Phi_2$ implies $\Phi_1 \times \Psi \sim \Phi_2 \times \Psi$ and $\Psi_1 \times \Psi_2$ implies $\Phi \times \Psi_1 \sim \Phi \times \Psi_2$. Indeed, for all $\Phi_1, \Phi_2 \: G \to G'$ and all $\Psi\: H \to H'$, if $ \Phi_1^\dagger \circ E_{G'} \circ \Phi_2 \geq \Id_G$, then
\begin{align*}
(\Phi_1 \times \Psi)^\dagger  &\circ (E_{G'} \boxprod E_{H'}) \circ (\Phi_2 \times \Psi)
\\ & =
(\Phi_1 \times \Psi)^\dagger \circ ((E_{G'} \times \Id_{H'})\vee (\Id_{G'} \times E_{H'})) \circ (\Phi_2 \times \Psi)
\\ & \geq 
(\Phi_1 \times \Psi)^\dagger \circ (E_{G'} \times \Id_{H'}) \circ (\Phi_2 \times \Psi)
\\ & =
(\Phi_1^\dagger \circ E_{G'} \circ \Phi_2) \times (\Psi^\dagger \circ \Id_{H'} \circ \Psi)
\geq
\Id_G \times \Id_H.
\end{align*}
The associators, braidings, and unitors do correspond to homomorphisms out of $K_1$, as in the case of identity morphisms. The coherence diagrams commute in the enriched sense because they commute in usual sense.

\begin{proposition}\label{F}
The bijection $\cat{qGph}(K, [G, H]) \to \cat{qGph}(K \boxprod G, H)$ that is defined by
$$
\Psi \mapsto \Eval_{[G,H]} \circ (\Psi \times \Id_G)
$$
is an isomorphism of graphs.
\end{proposition}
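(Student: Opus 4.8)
The plan is as follows. By \cref{D}, the assignment $\Psi\mapsto\Eval_{[G,H]}\circ(\Psi\times\Id_G)$ is already a bijection of the underlying morphism sets, so what remains is to show that it preserves and reflects adjacency. Writing $\Phi_i=\Eval_{[G,H]}\circ(\Psi_i\times\Id_G)$ for $i\in\{1,2\}$, I must prove that $\Psi_1$ and $\Psi_2$ are adjacent in $\cat{qGph}(K,[G,H])$ if and only if $\Phi_1$ and $\Phi_2$ are adjacent in $\cat{qGph}(K\boxprod G,H)$. Since the two hom-graphs are graphs and the map in question is a bijection on vertices, this biconditional is exactly what is needed to conclude, via \cref{B}, that the map is an isomorphism of graphs.

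It is convenient to record first an equivalent form of adjacency. For functions $F_1,F_2\:\X\to\Y$ and a relation $E$ on $\Y$ with $E^\dagger=E$, the three conditions $F_1^\dagger\circ E\circ F_2\geq\Id_\X$, $F_1\leq E\circ F_2$, and $F_2\leq E\circ F_1$ are equivalent: precomposing $F_1\leq E\circ F_2$ with $F_1^\dagger$ and using $F_1^\dagger\circ F_1\geq\Id_\X$ yields the first condition; composing the first condition on the left with $F_1$ and using $F_1\circ F_1^\dagger\leq\Id_\Y$ recovers $F_1\leq E\circ F_2$; and the equivalence with $F_2\leq E\circ F_1$ follows by taking daggers, since $E^\dagger=E$. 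I will use this freely for the symmetric relations $E_H$ and $E_{[G,H]}$.

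For the forward implication, suppose $\Psi_1\leq E_{[G,H]}\circ\Psi_2$. Applying $\Eval_{[G,H]}\circ(-\times\Id_G)$ and using functoriality and monotonicity of $\times$, one obtains $\Phi_1\leq\Eval_{[G,H]}\circ(E_{[G,H]}\times\Id_G)\circ(\Psi_2\times\Id_G)$. Since $E_{[G,H]}\times\Id_G\leq E_{[G,H]}\boxprod E_G$ and $\Eval_{[G,H]}$ is a homomorphism $[G,H]\boxprod G\to H$ by \cref{D}, we have $\Eval_{[G,H]}\circ(E_{[G,H]}\times\Id_G)\leq E_H\circ\Eval_{[G,H]}$, hence $\Phi_1\leq E_H\circ\Phi_2$, so $\Phi_1$ and $\Phi_2$ are adjacent.

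The backward implication is the heart of the matter, and it uses the maximality in \cref{C2}. Suppose $\Phi_1$ and $\Phi_2$ are adjacent, so that both $\Phi_1\leq E_H\circ\Phi_2$ and $\Phi_2\leq E_H\circ\Phi_1$. I would consider the relation $Q=(\Psi_1\circ\Psi_2^\dagger)\vee(\Psi_2\circ\Psi_1^\dagger)$ on $\V_{[G,H]}$, which is symmetric by construction, and check that $\Eval_{[G,H]}\circ(Q\times\Id_G)\leq E_H\circ\Eval_{[G,H]}$, i.e., that $\Eval_{G,H}\circ(J_{[G,H]}\times\Id_G)$ is a homomorphism $(\V_{[G,H]}\times\V_G,\,Q\times\Id_G)\to H$: distributing over the join, the summand coming from $\Psi_1\circ\Psi_2^\dagger$ rewrites as $\Phi_1\circ(\Psi_2\times\Id_G)^\dagger\leq E_H\circ\Phi_2\circ(\Psi_2\times\Id_G)^\dagger=E_H\circ\Eval_{[G,H]}\circ(\Psi_2\times\Id_G)\circ(\Psi_2\times\Id_G)^\dagger\leq E_H\circ\Eval_{[G,H]}$, the last step because $\Psi_2\times\Id_G$ is a function; the summand coming from $\Psi_2\circ\Psi_1^\dagger$ is bounded identically, now using $\Phi_2\leq E_H\circ\Phi_1$. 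By \cref{C2} this gives $Q\leq E_{[G,H]}$, and therefore $\Psi_1=\Psi_1\circ\Id_K\leq\Psi_1\circ\Psi_2^\dagger\circ\Psi_2\leq Q\circ\Psi_2\leq E_{[G,H]}\circ\Psi_2$, so $\Psi_1$ and $\Psi_2$ are adjacent. I expect the main obstacle to be exactly locating this witness relation $Q$: it must be symmetric, which forces the symmetrizing join, and verifying the homomorphism inequality genuinely uses adjacency of $\Phi_1$ and $\Phi_2$ in both directions, one direction per summand. Granting the biconditional, naturality of the isomorphism is a routine consequence of the naturality of $\Eval$ in $\cat{qSet}$ together with the computations earlier in this section, and I would relegate it to a remark.
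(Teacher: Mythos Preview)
Your proof is correct and follows essentially the same approach as the paper's own proof: both directions hinge on the same key relation $Q=(\Psi_1\circ\Psi_2^\dagger)\vee(\Psi_2\circ\Psi_1^\dagger)$ together with the maximality in \cref{C2}. The only cosmetic difference is that the paper invokes the $\cat{Gph}$-enrichment established earlier in section~\ref{section 4} for the forward implication, whereas you carry out that calculation directly via the equivalent form $\Psi_1\leq E_{[G,H]}\circ\Psi_2$; the content is the same.
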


\begin{proof}
Let $\Psi_1, \Psi_2\: K \to [G,H]$ be homomorphisms, and let $\Phi_i =  \Eval_{[G,H]} \circ (\Psi_i \times \Id_G)$ for each $i \in \{1,2\}$. If $\Psi_1 \sim \Psi_2$, then $\Psi_1 \times \Id_G \sim \Psi_2 \times \Id_G$, and hence, $\Eval_{[G,H]} \circ (\Psi_1 \times \Id_G) \sim \Eval_{[G,H]} \circ (\Psi_2 \times \Id_G)$, all because $\cat{qGph}$ is enriched over $\cat{Gph}$ as a symmetric monoidal category. Therefore, $\Psi_1 \sim \Psi_2$ implies $\Phi_1 \sim \Phi_2$.

Assume that $\Phi_1 \sim \Phi_2$. By definition, this means that
$$
(\Psi_1^\dagger \times \Id_G) \circ \Eval_{[G,H]}^\dagger \circ E_H \circ \Eval_{[G,H]} \circ (\Psi_2 \times \Id_G) \geq \Id_K \times \Id_G
$$
and hence implies that
$$
\Eval_{[G,H]}^\dagger \circ E_H \circ \Eval_{[G,H]} \geq (\Psi_1 \circ \Psi_2^\dagger) \times \Id_G.
$$
Similarly,
$$
\Eval_{[G,H]}^\dagger \circ E_H \circ \Eval_{[G,H]} \geq (\Psi_2 \circ \Psi_1^\dagger) \times \Id_G,
$$
and thus,
$$
\Eval_{[G,H]}^\dagger \circ E_H \circ \Eval_{[G,H]} \geq ((\Psi_1 \circ \Psi_2^\dagger)\vee(\Psi_2 \circ \Psi_1^\dagger)) \times \Id_G.
$$
The relation $(\Psi_1 \circ \Psi_2^\dagger)\vee(\Psi_2 \circ \Psi_1^\dagger)$ is symmetric, so
this inequality expresses that the function $\Eval_{[G,H]}\: \V_{[G,H]} \times \V_G \to \V_H$ is a homomorphism for the relations $((\Psi_1 \circ \Psi_2^\dagger)\vee(\Psi_2 \circ \Psi_1^\dagger))\times \Id_G$ and $E_H$ in the sense of \cref{A}.

By \cref{C2}, we find that $E_{[G,H]} \geq (\Psi_1 \circ \Psi_2^\dagger)$ or, equivalently, that $\Psi_1^\dagger \circ E_{[G,H]} \circ \Psi_2 \geq \Id_K$. Therefore, $\Psi_1 \sim \Psi_2$, establishing that $\Phi_1 \sim \Phi_2$ implies $\Psi_1 \sim \Psi_2$. We conclude that this bijection is an isomorphism of graphs.
\end{proof}

Thus, the symmetric monoidal category $\cat{qGph}$ is closed in a $\cat{Gph}$-enriched sense. Since the symmetric monoidal category $\cat{Gph}$ is closed by Remark~\ref{E}, it too is closed in a $\cat{Gph}$-enriched sense. We conclude this section by discussing the relationship between these two categories, which consists of a pair of symmetric monoidal functors, $\Inc\: \cat{Gph} \to \cat{qGph}$ and $\cat{qGph}(K_1, -)\: \cat{qGph} \to \cat{Gph}$. 

The functor $\mathrm{Inc}$ is defined via the canonical inclusion functor $\cat{Rel} \to \cat{qRel}$, which is a full and faithful symmetric monoidal functor \cite[section~3]{zbMATH07287276}. This canonical inclusion functor maps each set $X$ to the quantum set $`X$ whose atoms are one-dimensional and in one-to-one correspondence with the elements of $X$; we write $\At(`X) = \{\CC_x\mid x \in X \}$. It maps each relation $r$ from $X$ to $Y$ to the relation $`r$ from $`X$ to $`Y$ such that $`r(\CC_x, \CC_y) = L(\CC_x, \CC_y)$ iff $(x,y) \in r$.

\begin{definition}\label{G}
We define the symmetric monoidal functors $\Inc$ and $\cat{qGph}(K_1, -)$.
\begin{enumerate}
\item The functor 
$\Inc\: \cat{Gph} \to \cat{qGph}$ is defined by
\begin{enumerate}
\item $\Inc(V_G, \sim_G) = (`V_G, `\sim_G)$ for each graph $G = (V_G, \sim_G)$,
\item $\Inc(\phi) = `\phi$ for each homomorphism $\phi\: G \to H$.
\end{enumerate}
\item The functor
$
\cat{qGph}(K_1, -)\: \cat{qGph} \to \cat{Gph}
$
is defined by
\begin{enumerate}
\item $\cat{qGph}(K_1, G)$ is a graph because $\cat{qGph}$ is enriched over $\cat{Gph}$,
\item $\cat{qGph}(K_1, \Phi)$ is postcomposition by $\Phi$.
\end{enumerate}
\end{enumerate}
\end{definition}

It is routine to verify that both $\Inc$ and $\cat{qGph}(K_1, -)$ are strong symmetric monoidal functors. More care is required to prove that they are adjoint.

\begin{theorem}[also \cref{X}]\label{H}
The functor $\Inc\:\cat{Gph} \to \cat{qGph}$ is left adjoint to the functor $\cat{qGph}(K_1, -) \: \cat{qGph} \to \cat{Gph}$. This adjunction is enriched over $\cat{Gph}$ in the sense that the natural bijection
$$
\cat{Gph}(G, \cat{qGph}(K_1,H)) \iso \cat{qGph}(\Inc(G), H),
$$
where $G$ is a graph and $H$ is a quantum graph, is an isomorphism.
\end{theorem}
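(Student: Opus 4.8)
The plan is to construct the natural bijection $\cat{qGph}(\Inc(G), H) \iso \cat{Gph}(G, \cat{qGph}(K_1, H))$ by hand, exploiting the fact that $`V_G$ is the coproduct $\coprod_{x \in V_G} \one$ in $\cat{qSet}$ \cite[Remark~3.7]{zbMATH07287276}. Write $\iota_x \: \one \to {`V_G}$ for the coproduct injections. They satisfy $\iota_x^\dagger \circ \iota_y = \delta_{x,y}\,\Id_\one$ and $\bigvee_{x \in V_G} \iota_x \circ \iota_x^\dagger = \Id_{`V_G}$, and unwinding the definition of $`(-)$ on relations gives the presentation $E_{\Inc(G)} = {`\sim_G} = \bigvee_{x \sim_G y} \iota_x \circ \iota_y^\dagger$, the join running over ordered adjacent pairs. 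I will also use that every function $p \: \one \to \X$ satisfies $p^\dagger \circ p = \Id_\one$ (it is $\geq$ because $p$ is a function and $\leq$ because $\Id_\one$ is the top element of $\cat{qRel}(\one,\one) = \{0, \Id_\one\}$), in addition to $p \circ p^\dagger \leq \Id_\X$.

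The one load-bearing observation is that for any relation $E$ on a quantum set $\X$ and any two functions $p, q \: \one \to \X$,
\[
p \circ q^\dagger \leq E \EV p^\dagger \circ E \circ q \geq \Id_\one.
\]
The forward implication is $p^\dagger \circ E \circ q \geq p^\dagger \circ (p \circ q^\dagger) \circ q = (p^\dagger \circ p) \circ (q^\dagger \circ q) = \Id_\one$; the reverse is $p \circ q^\dagger = p \circ \Id_\one \circ q^\dagger \leq p \circ (p^\dagger \circ E \circ q) \circ q^\dagger = (p \circ p^\dagger) \circ E \circ (q \circ q^\dagger) \leq E$. Now a homomorphism $K_1 \to H$ is exactly a function $\one \to \V_H$ (its edge condition reads $0 \leq E_H \circ \Psi$, which is automatic), and two such functions $p, q$ are adjacent in $\cat{qGph}(K_1, H)$, in the sense of \cref{section 4}, precisely when $p^\dagger \circ E_H \circ q \geq \Id_\one$. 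So the displayed equivalence says: $p \sim q$ in $\cat{qGph}(K_1, H)$ iff $p \circ q^\dagger \leq E_H$.

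Next I would set up the bijection. Given a homomorphism $\Phi \: \Inc(G) \to H$, put $f_\Phi(x) = \Phi \circ \iota_x$; given a set map $f \: V_G \to \cat{qGph}(K_1, H)$, put $\Phi_f = \bigvee_{x} f(x) \circ \iota_x^\dagger$. The identities above show $\Phi_f$ is a function ($\Phi_f \circ \Phi_f^\dagger = \bigvee_x f(x) \circ f(x)^\dagger \leq \Id_{\V_H}$ and $\Phi_f^\dagger \circ \Phi_f \geq \bigvee_x \iota_x \circ \iota_x^\dagger = \Id_{`V_G}$, cross terms vanishing by orthogonality of the $\iota_x$), and that $\Phi \mapsto f_\Phi$ and $f \mapsto \Phi_f$ are mutually inverse bijections between functions $`V_G \to \V_H$ and set maps $V_G \to \cat{qGph}(K_1,H)$. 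To see this restricts to homomorphisms: since $\Phi$ is a function, $\Phi \circ E_{\Inc(G)} \leq E_H \circ \Phi$ is equivalent to $\Phi \circ E_{\Inc(G)} \circ \Phi^\dagger \leq E_H$ by the sandwiching argument used in the proof of \cref{B}; expanding $E_{\Inc(G)}$ and using $\Phi \circ \iota_x = f_\Phi(x)$, this becomes $f_\Phi(x) \circ f_\Phi(y)^\dagger \leq E_H$ for every edge $x \sim_G y$, i.e.\ (by the point lemma) $f_\Phi(x) \sim f_\Phi(y)$ in $\cat{qGph}(K_1,H)$ for every edge, i.e.\ $f_\Phi$ is a homomorphism $G \to \cat{qGph}(K_1, H)$. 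Naturality in $H$ is immediate (both sides act by postcomposition), and naturality in $G$ follows from $\Inc(g) \circ \iota_{x'} = \iota_{g(x')}$, equivalently $\iota_x^\dagger \circ \Inc(g) = \bigvee_{x' : g(x') = x} \iota_{x'}^\dagger$, which gives $\Phi_f \circ \Inc(g) = \Phi_{f \circ g}$.

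Finally, the enrichment claim: two homomorphisms $f_1, f_2 \: G \to \cat{qGph}(K_1, H)$ are adjacent in the standard graph structure on $\cat{Gph}(-,-)$ iff $f_1(x)^\dagger \circ E_H \circ f_2(x) \geq \Id_\one$ for all $x \in V_G$, and using $\iota_x^\dagger \circ \iota_y = \delta_{x,y}\,\Id_\one$ and $\bigvee_x \iota_x \circ \iota_x^\dagger = \Id_{`V_G}$ one checks this is equivalent to $\Phi_{f_1}^\dagger \circ E_H \circ \Phi_{f_2} \geq \Id_{`V_G}$ (the forward direction bounds $\Phi_{f_1}^\dagger \circ E_H \circ \Phi_{f_2} = \bigvee_{x,y} \iota_x \circ f_1(x)^\dagger \circ E_H \circ f_2(y) \circ \iota_y^\dagger$ below by its diagonal terms, the reverse conjugates by $\iota_x$). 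By the enrichment of $\cat{qGph}(\Inc(G),H)$ from \cref{section 4}, the latter inequality is exactly adjacency of $\Phi_{f_1}$ and $\Phi_{f_2}$, so the bijection is an isomorphism of graphs. The argument is largely bookkeeping; I expect the only mild obstacles to be verifying the point lemma $p \circ q^\dagger \leq E \Leftrightarrow p^\dagger \circ E \circ q \geq \Id_\one$ and the presentation $E_{\Inc(G)} = \bigvee_{x \sim_G y} \iota_x \circ \iota_y^\dagger$, which together convert edge conditions on morphisms out of $\Inc(G)$ into edge-by-edge conditions and thereby drive the whole proof.
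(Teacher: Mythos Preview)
Your argument is correct. The point lemma, the coproduct presentation $`V_G \cong \coprod_{x}\one$ with its orthogonality relations, and the join-distributivity of composition in $\cat{qRel}$ together do exactly the work you claim; each of the verifications you sketch goes through without hidden obstacles.

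The route, however, is genuinely different from the paper's. The paper does not decompose $\Inc(G)$ as a coproduct at all. Instead it introduces the \emph{classical part} $\Cl(H)$ of a quantum graph---the full sub-quantum-graph on the one-dimensional atoms, with $E_{\Cl(H)} = J^\dagger \circ E_H \circ J$---and proves two separate lemmas: that $\Cl$ is a $\cat{Gph}$-enriched coreflector of $\cat{qGph}$ onto classical quantum graphs (\cref{V}), and that there is a natural isomorphism $\Cl(H)\cong \Inc(\cat{qGph}(K_1,H))$ (\cref{W}). The adjunction is then obtained as the composite
\[
\cat{Gph}(G,\cat{qGph}(K_1,H)) \;\cong\; \cat{qGph}(\Inc(G),\Inc(\cat{qGph}(K_1,H))) \;\cong\; \cat{qGph}(\Inc(G),\Cl(H)) \;\cong\; \cat{qGph}(\Inc(G),H),
\]
the first step using only that $\Inc$ is full and faithful. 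Your approach is more elementary and entirely self-contained within the quantaloid calculus of $\cat{qRel}$; it never names $\Cl(H)$ and needs no auxiliary universal property beyond the coproduct. The paper's approach costs an extra definition and two lemmas, but it isolates the reusable fact that classical quantum graphs form a coreflective subcategory, and it makes the identification of the right adjoint with ``take the classical part'' explicit. In effect, your decomposition $\Phi = \bigvee_x f(x)\circ\iota_x^\dagger$ is the hands-on content of that coreflection, carried out on the domain side rather than the codomain side.
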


This plausible statement is proved in appendix~\ref{appendix A}. The functor $\Inc$ is full and faithful because the canonical inclusion functor $\cat{Rel} \to \cat{qRel}$ is full and faithful. It formalizes the intuition that every graph is a quantum graph. We will often suppress $\Inc$, writing $G$ as an abbreviation for $\Inc(G)$. The functor $\cat{qGph}(K_1, -)$ formalizes the intuition that every quantum graph has a maximal classical subgraph, as it happens with other quantum structures \cite{zbMATH06566391}.

\section{Graph homomorphism games}\label{section 5}

In this section, we prove that the quantum graphs $[G,H]$ of \cref{C} encode the existence of winning quantum strategies for graph homomorphism games in a simple way: the $(G,H)$-homomorphism game has a winning quantum strategy iff the quantum graph $[G,H]$ is nonempty.

\begin{definition}
Let $G$ and $H$ be finite graphs.
The \emph{$(G,H)$-homomorphism game} is played in a single round by three players: Alice, Bob, and a verifier. Play proceeds in the following way:
\begin{enumerate}
\item the verifier selects a pair of vertices $g_1, g_2 \in V_G$;
\item Alice selects a vertex $h_1 \in V_H$ and Bob selects a vertex $h_2 \in V_H$;
\item the victory condition for Alice and Bob is
$$
(g_1 = g_2 \mathrel{\Rightarrow} h_1 = h_2)\mathrel{\&} (g_1 \sim g_2 \mathrel{\Rightarrow} h_1 \sim h_2).
$$
\end{enumerate}
\end{definition}

This victory condition may be regarded as a function $$w\:V_G \times V_H \times V_H \times V_G \to \{0,1\}$$
that maps each play $(g_1, h_1, h_2, g_2)$ to $1$ if Alice and Bob win and to $0$ if they lose. In extreme generality, a strategy for Alice and Bob is a morphism $$s\:V_G \times V_G \to V_H \times V_H,$$
in some category that contains $\cat{Set}$ as a subcategory. It is a winning strategy if $w \circ (\id \times s \times \id)$ is constantly $1$ on the Cartesian square of the diagonal:
\[
\begin{tikzcd}[column sep = 12ex]
V_G \times V_G
\arrow{rrd}[swap]{\mathrm{cnst_1}}
\arrow{r}{\Delta_G \times \Delta_G}
&
V_G \times V_G \times V_G \times V_G
\arrow{r}{\id_G \times s \times \id_G}
&
V_G \times V_H \times V_H \times V_G
\arrow{d}{w}
\\
&
&
\{0,1\}
\end{tikzcd}
\]

Of course, in this extreme generality, a winning strategy exists in $\cat{Set}$ itself whenever $H$ has a pair of adjacent vertices. The existence of a winning strategy is only interesting under some additional constraint that expresses that Alice must select $h_1$ without knowing $g_2$ and, similarly, Bob must select $h_2$ without knowing $g_1$.

\begin{example}
The category $\cat{Stoch}$ is defined as follows:
\begin{enumerate}
\item an object of $\cat{Stoch}$ is a set;
\item a morphism $X \to Y$ of $\cat{Stoch}$ is a \emph{stochastic map} from $X$ to $Y$, i.e, a function from $X$ to the set of all discrete probability distributions on $Y$.
\end{enumerate}
The category $\cat{Set}$ may be regarded as a subcategory of $\cat{Stoch}$ by identifying the elements of each set $Y$ with the deterministic probability distributions on $Y$.

A \emph{classical strategy} for Alice and Bob is a morphism $s$ in the category $\cat{Stoch}$ of sets and stochastic maps that is factored as
$$
\begin{tikzcd}[column sep = 12ex]
V_G \times V_G
\arrow{r}{\id_G \times m \times \id_G}
\arrow[bend right = 10]{rr}[swap]{s}
&
V_G \times X_1 \times X_2 \times V_G
\arrow{r}{s_1 \times s_2}
&
V_H \times V_H,
\end{tikzcd}
$$
where $m\: \{\ast\} \to X_1 \times X_2$, $s_1\: V_G \times X_1 \to V_H$, and $s_2\: X_2 \times V_G \to V_H$ for some sets $X_1$ and $X_2$. Intuitively, $s_1$ is the strategy that is played by Alice, $s_2$ is the strategy that is played by Bob, and $m$ is their shared randomness, e.g., a one-time pad. It is a standard exercise to show that Alice and Bob have a classical winning strategy iff there is a homomorphism from $G$ to $H$.

If $H$ is a complete simple graph, then Alice and Bob have a classical winning strategy iff there is a coloring of $G$ by the vertices in $H$, because a graph homomorphism from $G$ to $H$ is the same thing as a coloring of $G$ by the vertices of $H$. Alice and Bob can use this strategy as a zero-knowledge proof that they possess such a graph coloring. They can use their graph coloring to consistently win the $(G,H)$-homomorphism game against the verifier, and they can use their shared randomness to conceal their graph coloring by randomly permuting the vertices of $H$ between games.
\end{example}

\begin{example}
The category $\cat{qStoch}$ is defined as follows:
\begin{enumerate}
\item an object of $\cat{qStoch}$ is a quantum set;
\item a morphism $\X \to \Y$ of $\cat{qStoch}$ is a \emph{stochastic map} from $\X$ to $\Y$, i.e., a trace-preserving completely positive map from $\ell^1(\X)$ to $\ell^1(\Y)$, 
\end{enumerate}
where $\ell^1(\X) = \bigoplus_{X \in \At(\X)}^{\ell^1} L(X)$. Thus, a stochastic map $\X \to \Y$ is a stochastic map $X \to Y$ when $\X = `X$ and $\Y = `Y$ and a trace-preserving completely positive map $M_m(\CC) \to M_n(\CC)$ when $\CC^m$ and $\CC^n$ are the unique atoms of $\X$ and $\Y$, respectively.

The category $\cat{Stoch}$ may be regarded as a subcategory of $\cat{qStoch}$. Here, the ``inclusion'' functor maps each set $X$ to the quantum set $`X$ and each stochastic map $f\: X \to Y$ to the linear map $\ell^1(X) \to \ell^1(Y)$ that is defined by
$$
a \mapsto \sum_{x \in X} a(x) f(x).
$$
It follows that $\cat{Set}$ may also be regarded as a subcategory of $\cat{qStoch}$.

A \emph{quantum strategy} for Alice and Bob is a morphism $s$ in the category $\cat{qStoch}$ of quantum sets and stochastic maps that is factored as
$$
\begin{tikzcd}[column sep = 12ex]
V_G \times V_G
\arrow{r}{\id_G \times m \times \id_G}
\arrow[bend right = 10]{rr}[swap]{s}
&
V_G \times \X_1 \times \X_2 \times V_G
\arrow{r}{s_1 \times s_2}
&
V_H \times V_H,
\end{tikzcd}
$$
where $m\: \{\ast\} \to \X_1 \times \X_2$, $s_1\: V_G \times \X_1 \to V_H$, and $s_2\: \X_2 \times V_G \to V_H$ for some quantum sets $\X_1$ and $\X_2$. Intuitively, $s_1$ is the strategy that is played by Alice, $s_2$ is the strategy that is played by Bob, and $m$ is their shared entanglement, e.g., a Bell state. It has been shown that a winning quantum strategy may exist even when there is no homomorphism from $G$ to $H$ \cite{zbMATH06555288}.

Man\v{c}inska and Roberson obtained a criterion for the existence of winning quantum strategies. In effect, they defined a quantum strategy to use \emph{atomic} quantum sets $\X$ and $\Y$ and a \emph{pure} state $m$, but this definition is equivalent to the definition that we have given here by the same argument as for classical strategies.
Their criterion is the following: there is a winning quantum strategy for the $(G,H)$-homomorphism game iff there is a positive integer $n$ and a family of $n \times n$ complex projection matrices $\{p_{gh}: g \in V_G, h \in V_H\}$ such that
\begin{enumerate}
\item $\sum_{h \in V_H} p_{gh}$ is the identity matrix for all $g \in V_G$,
\item $p_{g_1h_1}p_{g_2h_2} = 0_n$ whenever $w(g_1, h_1, h_2, g_2) = 0$.
\end{enumerate}
\end{example}

Man\v{c}inska and Roberson reformulated their criterion in terms of the graph $M(G,n)$, which we now define. We will then use this graph to relate the $(G,H)$-homomorphism game to the quantum homomorphism graph $[G,H]$.

\begin{definition}[{\cite[section~2.4]{zbMATH06555288}}]
Let $G$ be a finite graph, and let $n$ be a positive integer. We define $M(G,n)$ to be the following graph:
\begin{enumerate}
\item a vertex is a function $p \: V_G \to M_n(\CC)$ such that $\sum_{g \in V_G} p(g) = 1_n$ with each matrix $p(g)$ being a projection,
\item two vertices $p_1$ and $p_2$ are adjacent if $p_1(g_1)p_2(g_2) = 0_n$ for all $g_1 \not \sim g_2$.
\end{enumerate}
\end{definition}

\begin{lemma}\label{I}
Let $G$ be a finite graph, and let $n\geq 1$. Then, $$M(G,n) \iso \cat{qGph}(Q_n, G),$$
where $\Q_n$ is the quantum  set with $\At(\Q_n) = \CC^n$ and $Q_n = (\Q_n, 0_{\Q_n})$.
\end{lemma}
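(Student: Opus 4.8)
The plan is to write down an explicit bijection between the vertex set of $M(G,n)$ and the set $\cat{qGph}(Q_n, G)$ and then verify that it preserves and reflects adjacency. As announced in Section~\ref{section 2}, I would work with the concrete model of $\cat{qRel}$ in which $\V_G$ is modelled by its set of atoms $\At(\V_G) = \{\CC_g \mid g \in V_G\}$ of one-dimensional Hilbert spaces and $\Q_n$ by its single atom $\CC^n$; so a relation $\Q_n \to \V_G$ is a choice of subspace of $L(\CC^n, \CC_g)$ for each $g \in V_G$, and composition of relations is computed block-by-block as the span of operator products.

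First I would handle the vertices. Because $E_{Q_n} = 0_{\Q_n}$, the homomorphism inequality $\Phi \circ E_{Q_n} \leq E_G \circ \Phi$ holds automatically, so $\cat{qGph}(Q_n, G)$ and $\cat{qSet}(\Q_n, \V_G)$ coincide as sets. A function $F\: \Q_n \to \V_G$ is given by the subspaces $F_g := F(\CC^n, \CC_g) \leq L(\CC^n, \CC_g)$, and I would rephrase the two function axioms in terms of the subspaces $V_g := F_g^\dagger \leq \CC^n$. The inequality $F \circ F^\dagger \leq \Id_{\V_G}$ amounts to $F_h \cdot F_g^\dagger = 0$ for $g \neq h$, i.e.\ to pairwise orthogonality of the $V_g$; and $F^\dagger \circ F \geq \Id_{\Q_n}$ amounts to $\mathrm{id}_{\CC^n} \in \sum_g F_g^\dagger \cdot F_g$, which — since $F_g^\dagger \cdot F_g$ is the space $p_g \, M_n(\CC) \, p_g$ of operators supported on $V_g$, with $p_g$ the orthogonal projection onto $V_g$ — forces $\bigoplus_g V_g = \CC^n$; and conversely any orthogonal decomposition $\CC^n = \bigoplus_g V_g$ arises this way. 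This identifies $\cat{qGph}(Q_n, G)$ with the set of families $(p_g)_{g \in V_G}$ of projections in $M_n(\CC)$ with $\sum_g p_g = 1_n$, which is exactly the vertex set of $M(G,n)$. (The same count follows from the duality between $\cat{qSet}$ and hereditarily atomic von Neumann algebras: a function $\Q_n \to \V_G$ is a unital normal $\dagger$-homomorphism $\CC^{V_G} \to M_n(\CC)$, equivalently a family of projections summing to $1_n$, mutual orthogonality of such projections being automatic.)

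Next I would handle the edges. Writing $\Phi_1, \Phi_2\: Q_n \to G$ for the homomorphisms attached to $(p^1_g)_g$ and $(p^2_g)_g$, the dictionary above gives $\Phi_i^\dagger(\CC_g, \CC^n) = \mathrm{ran}\,p^i_g$ and $\Phi_i(\CC^n, \CC_g) = (\mathrm{ran}\,p^i_g)^\dagger$, while $E_G(\CC_h, \CC_g) = L(\CC_h, \CC_g) \iso \CC$ when $g \sim h$ and $0$ otherwise. A block-by-block computation of the composite relation then yields
\[
(\Phi_1^\dagger \circ E_G \circ \Phi_2)(\CC^n, \CC^n) = \sum_{g \sim h} p^1_g \, M_n(\CC) \, p^2_h,
\]
the sum running over ordered pairs of adjacent vertices of $G$. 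By the definition of adjacency in $\cat{qGph}(Q_n, G)$ from Section~\ref{section 4}, $\Phi_1 \sim \Phi_2$ iff this subspace contains $\mathrm{id}_{\CC^n}$. Since $\sum_g p^1_g = \sum_h p^2_h = 1_n$, the maps $Y \mapsto p^1_g Y p^2_h$ are mutually orthogonal idempotents on $M_n(\CC)$ that sum to the identity, so $M_n(\CC) = \bigoplus_{g,h} p^1_g \, M_n(\CC) \, p^2_h$ and the component of $1_n$ in the $(g,h)$-summand is $p^1_g p^2_h$. Hence $\mathrm{id}_{\CC^n}$ lies in the sum over edges iff $p^1_g p^2_h = 0_n$ for every non-edge $g \not\sim h$, which is precisely adjacency of $p^1$ and $p^2$ in $M(G,n)$. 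So the vertex bijection is an isomorphism of graphs.

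The main obstacle I anticipate is the vertex step: carefully converting the abstract axioms for a function in $\cat{qSet}$ into the concrete statement that a function $\Q_n \to \V_G$ is exactly an orthogonal decomposition $\CC^n = \bigoplus_g V_g$, and in particular pinning down $F(\CC^n, \CC_g)$ as $(\mathrm{ran}\,p_g)^\dagger$ rather than, say, $\mathrm{ran}\,p_g$ itself. Once that identification is in place, the edge part is a short linear-algebra argument via the block decomposition of $M_n(\CC)$ induced by the two partitions of unity, together with the bookkeeping needed to match it against the definition of adjacency recorded in Section~\ref{section 4}.
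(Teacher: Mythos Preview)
Your proposal is correct and follows the same overall plan as the paper: set up an explicit bijection on vertices and then verify it respects adjacency. The vertex step is essentially identical; the paper simply cites the duality between $\cat{qSet}$ and hereditarily atomic von Neumann algebras to obtain $F_p(\CC^n,\CC_g)=L(\CC^n,\CC_g)\,p(g)$, which is exactly the description you unpack by hand.

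The edge step is where you diverge. The paper works on the $\V_G$ side: it uses the reformulation $E_G \geq F_{p_1}\circ F_{p_2}^\dagger$ of adjacency (equivalent to $F_{p_1}^\dagger\circ E_G\circ F_{p_2}\geq \Id_{\Q_n}$ via the function axioms) and then computes $(F_{p_1}\circ F_{p_2}^\dagger)(\CC_{g_2},\CC_{g_1})$, which lives in the one-dimensional space $L(\CC_{g_2},\CC_{g_1})$, so the comparison with $E_G$ is an immediate ``zero or everything'' dichotomy. You instead stay with the literal Section~\ref{section 4} definition and compute $(\Phi_1^\dagger\circ E_G\circ \Phi_2)(\CC^n,\CC^n)$ on the $\Q_n$ side, then invoke the Peirce decomposition $M_n(\CC)=\bigoplus_{g,h} p^1_g\,M_n(\CC)\,p^2_h$ to test membership of $1_n$. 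Both routes are short; the paper's buys simplicity by reducing to one-dimensional blocks, while yours avoids having to first pass to the equivalent form $E_G\geq \Phi_1\circ\Phi_2^\dagger$.
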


\begin{proof}
We obtain this isomorphism by composing bijections $$V_{M(G,n)} \iso \mathrm{Hom}(\ell^\infty(V_G), M_n(\CC)) \iso  \cat{qSet}(\Q_n, `V_G),$$
where the expression $\mathrm{Hom}(\ell^\infty(V_G), M_n(\CC))$ denotes the set of all unital normal $\dagger$-homomorphisms $\ell^\infty(V_G) \to M_n(\CC)$. The first bijection maps a vertex $p$ of $M(G, n)$ to the unique unital normal $\dagger$-homomorphism $\pi$ such that $\pi(\delta_g) = p(g)$. The second bijection maps $\pi$ to the function $F_p\: \Q_n \to `V_G$ that is defined by $$F_p(\CC^n, \CC_g) = \{v \in L(\CC^n, \CC_g) \mid v = v \pi(\delta_g)\} = L(\CC^n, \CC_g) \pi(\delta_g) = L(\CC^n, \CC_g)p(g)$$ \cite[Theorem~6.3]{zbMATH07287276}. Altogether, we conclude that $p \mapsto F_p$ is a bijection
\begin{align*}
V_{M(G,n)} \to \cat{qSet}(\Q_n, V_G) = \cat{qGph}(Q_n, G).
\end{align*}
To show that this bijection is an isomorphism, we reason that for all vertices $p_1$ and $p_2$ of $M(G,n)$,
 \begingroup \makeatletter\tagsleft@false\makeatother
\begin{align*}
&
p_1(g_1)p_2(g_2) = 0_n \text{ for all } g_1 \not \sim g_2
\\ & \EV
L(\CC^n, \CC_{g_1})p_1(g_1)p_2(g_2)L(\CC_{g_2}, \CC^n) = 0 \text{ for all } g_1 \not \sim g_2
\\ & \EV
F_{p_1}(\CC^n, \CC_{g_1})F^\dagger_{p_2}(\CC_{g_2}, \CC^n) = 0 \text{ for all } g_1 \not \sim g_2
\\ & \EV
(F_{p_1} \circ F_{p_2}^\dagger)(\CC_{g_2}, \CC_{g_1}) = 0 \text{ for all } g_1 \not \sim g_2
\\ & \EV
`e_G(\CC_{g_2}, \CC_{g_1}) \geq (F_{p_1} \circ F_{p_2}^\dagger)(\CC_{g_2}, \CC_{g_1}) \text{ for all } g_1 \text{ and } g_2
\\ & \EV `e_G \geq F_{p_1} \circ F_{p_2}^\dagger.\tag*{\qedsymbol}\popQED
\end{align*}\endgroup 
\end{proof}

\begin{theorem}\label{J}
The following are equivalent for finite simple graphs $G$ and $H$:
\begin{enumerate}
\item there is a winning quantum strategy for the $(G,H)$-homomorphism game,
\item $[G, H] \neq K_0$, where $K_0$ is the empty graph.
\end{enumerate}
\end{theorem}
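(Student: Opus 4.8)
The plan is to probe the quantum graph $[G,H]$ with the one-atom quantum graphs $Q_n$ and match the resulting data against the Man\v{c}inska--Roberson criterion recalled above. \emph{Step 1: $[G,H] \neq K_0$ iff $\cat{qGph}(Q_n, [G,H]) \neq \emptyset$ for some $n \geq 1$.} Since $Q_n$ is edgeless, a homomorphism $Q_n \to [G,H]$ is simply a function $\Q_n \to \V_{[G,H]}$, equivalently a unital normal $\dagger$-homomorphism $\ell^\infty(\V_{[G,H]}) \to M_n(\CC)$. Such a homomorphism exists for some $n$ iff $\ell^\infty(\V_{[G,H]}) \neq 0$ iff $\At(\V_{[G,H]}) \neq \emptyset$ iff $[G,H] \neq K_0$; for the forward implication one takes $n = \dim W$ for any atom $W$ of $\V_{[G,H]}$ and uses the projection of $\ell^\infty(\V_{[G,H]})$ onto the summand $L(W)$.

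\emph{Step 2: $\cat{qGph}(Q_n, [G,H]) \iso \cat{Gph}(G, M(H,n))$.} By \cref{D} we have $\cat{qGph}(Q_n, [G,H]) \iso \cat{qGph}(Q_n \boxprod G, H)$, so it suffices to identify $\cat{qGph}(Q_n \boxprod G, H)$ with $\cat{Gph}(G, M(H,n))$. This can be done by chasing the enriched closed structure --- use the braiding and \cref{D} to rewrite it as $\cat{qGph}(G, [Q_n, H])$, then \cref{H} to get $\cat{Gph}(G, \cat{qGph}(K_1, [Q_n,H]))$, then \cref{F} together with $K_1 \boxprod Q_n \iso Q_n$ to get $\cat{Gph}(G, \cat{qGph}(Q_n, H))$, and finally \cref{I} --- or directly in $\cat{qRel}$ by a computation parallel to the proof of \cref{I}, starting from $Q_n \boxprod G = (\Q_n \times `V_G, \Id_{\Q_n} \times `e_G)$.

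\emph{Step 3: $\cat{Gph}(G, M(H,n)) \neq \emptyset$ for some $n$ iff there is a winning quantum strategy.} Unwinding the definition of $M(H,n)$, a graph homomorphism $G \to M(H,n)$ is exactly a family $(p_{gh})_{g \in V_G,\, h \in V_H}$ of projections in $M_n(\CC)$ with $\sum_h p_{gh} = 1_n$ for each $g$ and $p_{g_1h_1}p_{g_2h_2} = 0$ whenever $g_1 \sim g_2$ and $h_1 \not\sim h_2$. A family of projections summing to $1_n$ is automatically pairwise orthogonal, so such a family also has $p_{gh_1}p_{gh_2} = 0$ for $h_1 \neq h_2$; and since $G$ and $H$ are simple, the plays with $w(g_1,h_1,h_2,g_2) = 0$ are precisely those with $g_1 = g_2,\ h_1 \neq h_2$ or $g_1 \sim g_2,\ h_1 \not\sim h_2$. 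Hence these families are exactly the families in the Man\v{c}inska--Roberson criterion, and the existence of one for some $n$ is equivalent to the existence of a winning quantum strategy. Chaining Steps 1--3 proves the theorem.

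The step I expect to be the main obstacle is Step 2: either keeping track of which isomorphisms in the chain are isomorphisms of \emph{graphs}, so that $\cat{Gph}(G,-)$ may legitimately be applied, or, on the direct route, the $\cat{qRel}$ bookkeeping. On the direct route one must check that for the function $F$ corresponding to a family $(p_{gh})$, the homomorphism inequality $F \circ (\Id_{\Q_n} \times `e_G) \leq `e_H \circ F$ is equivalent to $\bigvee_{g' \sim g} p_{g'h} \leq \bigvee_{h' \sim h} p_{gh'}$ for all $g$ and $h$, and hence to the vanishing conditions $p_{g_1h_1}p_{g_2h_2} = 0$ for $g_1 \sim g_2$, $h_1 \not\sim h_2$. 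Steps 1 and 3 are routine.
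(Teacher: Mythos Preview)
Your proposal is correct and follows essentially the same route as the paper: the paper's proof is precisely your Step~2 via the enriched chain (\cref{D}, braiding, \cref{D}, \cref{H}, \cref{F}, unitor, \cref{I}), followed by your Steps~1 and~3, with Step~3 handled by citing \cite[Theorem~2.8]{zbMATH06555288} rather than unwinding the criterion explicitly. Your worry about Step~2 is exactly the point addressed by \cref{F}, which guarantees that the relevant bijection is a graph isomorphism and so may be fed into $\cat{Gph}(G,-)$.
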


\begin{proof}
For each positive integer $n$, we calculate that
\begin{align*}
\cat{qGph}(Q_n, [G, H])
& 
\iso \cat{qGph}(Q_n \boxprod G, H)
\\ & \iso 
\cat{qGph}(G \boxprod Q_n, H)
\\ & \iso
\cat{qGph}(G, [Q_n, H])
\\ & \iso 
\cat{Gph}(G, \cat{qGph}(K_1, [Q_n, H])) 
\\ & \iso
\cat{Gph}(G, \cat{qGph}(K_1 \boxprod Q_n, H))
\\ & \iso
\cat{Gph}(G, \cat{qGph}(Q_n, H))
\\ & \iso
\cat{Gph}(G, M(H,n)).
\end{align*}
The first, second, and third bijections come from the closed symmetric monoidal structure of $\cat{qGph}$ (section \ref{section 3}). The fourth and fifth bijections then come from \cref{H} and \cref{F}, respectively. Finally, the sixth bijection comes from the $\cat{Gph}$-enriched closed symmetric monoidal strucutre of $\cat{qGph}$ (section~\ref{section 4}), and the seventh bijection comes from \cref{I}.

We prove the theorem via a sequence of property equivalences. First, there is a winning quantum strategy for the $(G,H)$-homomorphism game iff the set $\cat{Gph}(G, M(H,n))$ is nonempty for some positive integer $n$ by \cite[Theorem~2.8]{zbMATH06555288}. Second, for all positive integers $n$, the set $\cat{Gph}(G, M(H,n))$ is nonempty iff the set $\cat{qGph}(Q_n, [G, H])$ is nonempty. Third, the set $\cat{qGph}(Q_n, [G, H])$ is empty for all positive integers $n$ iff $[G, H] = K_0$ because $K_0$ is the unique quantum graph whose vertex quantum set is empty.
\end{proof}

\section{Homomorphisms as quantum operations}\label{section 6}

In this section, we identify homomorphisms between finite quantum graphs with certain quantum operations, confirming that \cref{A} is, for finite quantum graphs, a special case of \cite[Definition~8.5]{zbMATH07388954}. The latter definition is a quantum generalization of substochastic graph homomorphisms, whereas \cref{A} is a quantum generalization of graph homomorphisms in the standard sense.
Much of this section revisits results from \cite{zbMATH07809225}, which we present in a less general and more elementary way.

Any finite physical system, such as a computer with both quantum and classical registers, may be modeled by a finite-dimensional von Neumann algebra. Up to isomorphism, finite-dimensional von Neumann algebras are exactly unital $\dagger$-algebras of matrices, so in this section, we will work explicitly with this class, following \cite{zbMATH07388954}. We call such algebras \emph{matrix $\dagger$-algebras}. A \emph{quantum operation} between two finite physical systems is then defined to be a trace-nonincreasing completely positive map $\varphi\: \M \to \N$, where $\M \subseteq M_m(\CC)$ and $\N \subseteq M_n(\CC)$ are matrix $\dagger$-algebras \cite{zbMATH05995689}. 

Recall that a linear map $\varphi\: M_m(\CC) \to M_n(\CC)$ is completely positive and trace-nonincreasing iff it has a \emph{Kraus decomposition}
$$
\varphi(a) = \sum_{i = 1}^k v_i a v_i^\dagger, \qquad \sum_{i = 1}^k v_i^\dagger v_i \leq 1_m,
$$
where $v_1, \ldots, v_k \in M_{n \times m}(\CC)$ \cite[Theorem~8.1]{zbMATH05835673}. Furthermore, although the Kraus decomposition of $\varphi$ is not unique, the span of the operators $v_1, \ldots, v_k$ does not depend on their choice \cite[Theorem~8.2]{zbMATH05835673}. These elementary facts generalize to all trace-nonincreasing completely positive maps $\varphi\: \M \to \N$, as we now verify (cf.~\cite[section~7]{zbMATH07856619}).

\begin{proposition}\label{K}
Let $\M \subseteq M_m(\CC)$ and $\N \subseteq M_n(\CC)$ be matrix $\dagger$-algebras, and let $\varphi\: \M \to \N$ be a trace-nonincreasing complete positive map. Then,
\begin{enumerate}[ref = \thelemma(\arabic*)]
\item\label[proposition]{K1} there exist matrices $v_1, \ldots, v_k \in M_{n \times m}(\CC)$ such that
$$
\varphi(a) = \sum_{i = 1}^k v_i a v_i^\dagger, \qquad \sum_{i = 1}^k v_i^\dagger v_i \leq 1_m,
$$
\item\label[proposition]{K2} the subspace
$$
\T_\varphi : = \mathrm{span}\{b' v_i a' \mid a' \in \M', \, b' \in \N', \, 1 \leq i \leq k\} \subseteq M_{n \times m} (\CC)
$$
does not depend on the choice of the matrices $v_1, \ldots, v_k$,
\item\label[proposition]{K3} $\varphi$ has a Kraus decomposition in the sense of claim~1 such that
$$
\T_\varphi = \mathrm{span}\{v_i \mid 1 \leq i \leq k\},
$$
\item\label[proposition]{K4} for all $x_1, x_2 \in \CC^m$, the following are equivalent:
\begin{enumerate}
\item $\tr(\varphi(x_1 x_1^\dagger)\varphi(x_2x_2^\dagger))= 0$,
\item $x_1^\dagger (\T_\varphi^\dagger \cdot \T_\varphi) x_2 = \{0\}$,
\end{enumerate}
\item\label[proposition]{K5} for all $x_1 \in \CC^m$ and $x_2 \in \CC^n$, the following are equivalent:
\begin{enumerate}
\item $\tr(x_2 x_2^\dagger\varphi(x_1 x_1^\dagger)) = 0$,
\item $x_2^\dagger \T_\varphi x_1 = \{0\},$
\end{enumerate}
\item\label[proposition]{K6} the pushfoward \cite{zbMATH07388954} along $\varphi$ of a quantum relation $\R$ on $\M$ is the quantum relation $\overrightarrow{R} = \T_\varphi \cdot \R \cdot \T_\varphi^\dagger$ on $\N$,
\item\label[proposition]{K7} the pullback \cite{zbMATH07388954} along $\varphi$ of a quantum relation $\S$ on $\N$ is the quantum relation
$
\overleftarrow{S} = \T_\varphi^\dagger \cdot \S \cdot \T_\varphi
$
on $\M$.
\end{enumerate}
\end{proposition}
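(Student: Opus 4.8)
The plan is to establish the seven claims in order, with claims~2 and~3 as the technical core; once a Kraus decomposition realizing $\T_\varphi$ is available, everything else is short.

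For claim~1 I would introduce the trace-preserving conditional expectation $E\colon M_m(\CC)\to\M$, that is, the orthogonal projection onto $\M$ for the inner product $\langle a,b\rangle=\tr(a^\dagger b)$; it is unital, completely positive, and trace-preserving. Then $\varphi\circ E\colon M_m(\CC)\to M_n(\CC)$ is completely positive and trace-nonincreasing, since $\tr((\varphi\circ E)(x))=\tr(\varphi(E(x)))\le\tr(E(x))=\tr(x)$ for $x\ge 0$, so \cite[Theorem~8.1]{zbMATH05835673} applied to $\varphi\circ E$ furnishes $v_1,\dots,v_k\in M_{n\times m}(\CC)$ with $(\varphi\circ E)(a)=\sum_i v_i a v_i^\dagger$ and $\sum_i v_i^\dagger v_i\le 1_m$; restricting to $\M$, where $E$ is the identity, gives claim~1.

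The heart of the proof is the assertion that $\N'\cdot\mathrm{span}\{v_i\}\cdot\M'$ equals $W_0$, the Kraus span of the fixed map $\varphi\circ E$, for \emph{every} Kraus decomposition $\{v_i\}$ of $\varphi|_\M$. This is claim~2 directly, and claim~3 follows by taking for $\{v_i\}$ the Kraus operators of $\varphi\circ E$ itself — which satisfy $\sum_i v_i^\dagger v_i\le 1_m$ by \cite[Theorem~8.1]{zbMATH05835673}, span $W_0$ by \cite[Theorem~8.2]{zbMATH05835673}, and, since $E$ restricts to the identity on $\M$, furnish a Kraus decomposition of $\varphi$ in the sense of claim~1. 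To prove the assertion I would first record two facts. The block decomposition $\CC^m\cong\bigoplus_j\CC^{a_j}\otimes\CC^{b_j}$ realizing $\M=\bigoplus_j M_{a_j}(\CC)\otimes 1_{b_j}$ and $\M'=\bigoplus_j 1_{a_j}\otimes M_{b_j}(\CC)$ exhibits $E$ as block-diagonalization followed by a normalized partial trace, so $E$ has a Kraus decomposition by operators in $\M'$ that span $\M'$. And $W_0$ is an $\N'$-$\M'$-bimodule: conjugating the Kraus operators of $\varphi\circ E$ by a unitary of $\N'$, resp.\ of $\M'$, again yields a Kraus decomposition of $\varphi\circ E$ — the first because its range lies in $\N$, the second because $E\circ(x\mapsto u' x (u')^\dagger)=E$ for $u'\in\M'$ by the trace characterization of $E$ — so by \cite[Theorem~8.2]{zbMATH05835673} the Kraus span is unchanged. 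Now, on one hand, extending $\{v_i\}$ to $\Psi:=\sum_i v_i(-)v_i^\dagger$ on all of $M_m(\CC)$ and composing with the $\M'$-valued Kraus decomposition of $E$ expresses $\varphi\circ E=\Psi\circ E$ with Kraus operators in $\mathrm{span}\{v_i\}\cdot\M'$, so $W_0\subseteq\mathrm{span}\{v_i\}\cdot\M'$. On the other hand, under the Choi isomorphism the Kraus span of a completely positive map is the image of the range of its Choi matrix; since $1_m\in\M'$ is in the Kraus span of $E$, the maximally entangled vector $\Omega=\sum_i e_i\otimes e_i$ lies in the range of the Choi matrix $C_E$, so $\Omega\Omega^\dagger\le\lambda\,C_E$ for some $\lambda>0$, and applying the completely positive map $\id_m\otimes\Psi$ gives $C_\Psi\le\lambda\,C_{\Psi\circ E}$; taking ranges yields that the Kraus span $\mathrm{span}\{v_i\}$ of $\Psi$ is contained in the Kraus span of $\Psi\circ E=\varphi\circ E$, which is $W_0$. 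The two inclusions together with the bimodule property of $W_0$ give $\N'\cdot\mathrm{span}\{v_i\}\cdot\M'=W_0$. I expect this paragraph to be the main obstacle — especially the Choi-matrix domination step and the bimodule verification.

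Claims~4 and~5 are then short computations. By claim~3 I would fix a Kraus decomposition with $\mathrm{span}\{v_i\}=\T_\varphi$, so that $\T_\varphi^\dagger\cdot\T_\varphi=\mathrm{span}\{v_i^\dagger v_j\}$; writing $\varphi(x_kx_k^\dagger)=\sum_i(v_ix_k)(v_ix_k)^\dagger$ and expanding the traces gives $\tr(\varphi(x_1x_1^\dagger)\varphi(x_2x_2^\dagger))=\sum_{i,j}|x_1^\dagger v_i^\dagger v_j x_2|^2$ and $\tr(x_2x_2^\dagger\varphi(x_1x_1^\dagger))=\sum_i|x_2^\dagger v_i x_1|^2$, each a sum of nonnegative terms; hence the left-hand side vanishes iff $x_1^\dagger v_i^\dagger v_j x_2=0$ for all $i,j$, resp.\ $x_2^\dagger v_i x_1=0$ for all $i$, which is exactly $x_1^\dagger(\T_\varphi^\dagger\cdot\T_\varphi)x_2=\{0\}$, resp.\ $x_2^\dagger\T_\varphi x_1=\{0\}$. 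For claims~6 and~7 I would unwind the definitions of pushforward and pullback along $\varphi$ from \cite{zbMATH07388954} in terms of a Kraus decomposition, normalized once more so that $\mathrm{span}\{v_i\}=\T_\varphi$; this identifies them with $\T_\varphi\cdot\R\cdot\T_\varphi^\dagger$ and $\T_\varphi^\dagger\cdot\S\cdot\T_\varphi$, and the only remaining point, that these are quantum relations on $\N$ and on $\M$ respectively, is immediate from $\T_\varphi$ being an $\N'$-$\M'$-bimodule.
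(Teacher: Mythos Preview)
Your proof is correct, and for claims~4--7 it coincides with the paper's. For claims~1--3, both you and the paper reduce to the full-matrix-algebra case by precomposing $\varphi$ with a completely positive retraction $M_m(\CC)\to\M$, but the choices differ. The paper uses the central pinch $\pi(a)=\sum_j p_j a p_j$ over the minimal central projections of $\M$; you use the full trace-preserving conditional expectation $E$.

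With $\pi$, the paper's claim~2 is very short: for two Kraus decompositions $\{v_i\}$ and $\{w_i\}$ of $\varphi$, the families $\{v_ip_j\}$ and $\{w_ip_j\}$ both decompose $\varphi\circ\pi$ on all of $M_m(\CC)$, so their spans agree by \cite[Theorem~8.2]{zbMATH05835673}; since $p_j\in\M'$ and $\sum_j p_j=1$, this gives $\N'\cdot\mathrm{span}\{v_i\}\cdot\M'=\N'\cdot\mathrm{span}\{w_i\}\cdot\M'$ directly. Claim~3 is then obtained by an explicit averaging trick: taking unitary bases $u_h$ of $\M'$ and $w_j$ of $\N'$, the rescaled operators $w_jv_iu_h/\sqrt{st}$ form a Kraus decomposition of $\varphi$ whose span is exactly $\N'\cdot\mathrm{span}\{v_i\}\cdot\M'=\T_\varphi$. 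This route avoids your Choi-matrix domination step and your separate bimodule verification, trading them for the unitary-basis computation.

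Your route, in turn, is more robust: $E$ genuinely maps $M_m(\CC)$ into $\M$ for \emph{any} unital embedding $\M\subseteq M_m(\CC)$, whereas $\pi$ lands in $\M$ only when each simple summand sits inside $M_m(\CC)$ with multiplicity one. So the paper's argument, as written, implicitly assumes this multiplicity-free situation, while yours does not. Your Choi-matrix step ($\Omega\Omega^\dagger\le\lambda C_E$ hence $C_\Psi\le\lambda C_{\varphi\circ E}$) is a clean way to get the inclusion $\mathrm{span}\{v_i\}\subseteq W_0$ that the pinch argument obtains for free when it applies.
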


\begin{proof}
Let $\pi\:M_m(\CC) \to \M$ be the completely positive map
$$
\pi(a) = \sum_{j = 1}^r p_j a p_j,
$$
where $p_1, \ldots, p_r$ are the minimal central projections of $\M$. We observe that $\pi(a) = a$ for all $a \in \M$ and that $\mathrm{tr}(\pi(a)) = \tr(a)$ for all $a \in M_m(\CC)$. The map $\varphi \circ \pi\: M_m(\CC) \to \N$ is completely positive and trace-nonincreasing, so it has a Kraus decomposition $v_1, \ldots v_k \in M_{n \times m}(\CC)$ by \cite[Theorem~8.1]{zbMATH05835673}. This is also a Kraus decomposition for $\varphi$ because $\varphi(a) = (\varphi \circ \pi)(a)$ for all $a \in \M$. We have proved claim 1.

Let $\varphi(a) = \sum_{i=1}^l w_i a w_i^\dagger$ be another Kraus decomposition of $\varphi$. Then,
$$
\sum_{i = 1}^k \sum_{j = 1}^r v_i p_j a p_j v_i^\dagger = (\varphi \circ \pi)(a) = \sum_{i = 1}^l \sum_{j = 1}^r w_i p_j a p_j w_i^\dagger
$$
for all $a \in M_m(\CC)$. In other words, the matrices $v_i p_j$, for $1 \leq i \leq k$ and $1 \leq j \leq r$, are a Kraus decomposition of $\varphi \circ \pi$ and the matrices $w_i p_j$, for $1 \leq i \leq k$ and $1 \leq j \leq r$, are a Kraus decomposition of $\varphi \circ \pi$ too. It follows by \cite[Theorem~8.2]{zbMATH05835673} that
$$
\mathrm{span}\{v_ip_j \mid 1 \leq i \leq k, \, 1 \leq j \leq r \} = \mathrm{span}\{w_ip_j \mid 1 \leq i \leq l, \, 1 \leq j \leq r \}.
$$
Since $p_j \in \M'$ for all $1 \leq j \leq r$, we conclude that 
\begin{align*}
\mathrm{span}\{b'v_i a' \mid{}& a' \in \M', \, b' \in \N', \,  1 \leq i \leq k\}
\\ & =
\mathrm{span}\{b'v_i p_j a' \mid a' \in \M', \, b' \in \N',\,  1 \leq i \leq k, \, 1 \leq j \leq r\}
\\ & =
\mathrm{span}\{b'w_i p_j a' \mid a' \in \M', \, b' \in \N',\, 1 \leq i \leq l, \, 1 \leq j \leq r\}
\\ & =
\mathrm{span}\{b'w_i a' \mid a' \in \M', \, b' \in \N',\, 1 \leq i \leq l\}.
\end{align*}
Indeed, $p_j a' \in \M'$ for all $a' \in \M'$ and $1 \leq j \leq r$, and $a' = \sum_{j=1}^r p_j a'$ for all $a' \in \M'$. We have proved claim 2.

Let $u_1, \ldots, u_s \in \M'$ and $w_1, \ldots, w_t \in \N'$ be bases of unitaries for $\M'$ and $\N'$, respectively. It follows that
$$
\varphi(a) = \sum_{h = 1}^s \sum_{i = 1}^k \sum_{j =1}^t \left(\frac{w_j v_i u_h}{\sqrt{st}}\right) a \left(\frac{w_j v_i u_h}{\sqrt{st}}\right)^\dagger,
$$
$$
\sum_{h = 1}^s \sum_{i = 1}^k \sum_{j =1}^t \left(\frac{w_j v_i u_h}{\sqrt{st}}\right)^\dagger \left(\frac{w_j v_i u_h}{\sqrt{st}}\right) \leq 1,
$$
$$
\T_\varphi = \mathrm{span}\{w_j v_i u_h \mid 1 \leq h \leq s,\, 1 \leq i \leq k, \, 1 \leq j \leq t\}.
$$
We have proved claim 3.

Let $\varphi(a) = \sum_{i=1}^k v_i a v_i^\dagger$ be a Kraus decomposition of $\varphi$ such that $\T_\varphi = \mathrm{span}\{v_i \mid 1 \leq i \leq k\}$, as in claim 3. The standard argument \cite[section~3]{zbMATH06727895} that $\T_\varphi^\dagger \cdot \T_\varphi$ is the confusability graph of $\varphi$ when $\M = M_m(\CC)$ and $\N = M_n(\CC)$ applies in the general case as well. We reproduce it here, calculating that
\begin{align*}
\tr(\varphi(x_1 x_1^\dagger) \varphi(x_2 x_2^\dagger))  & = \sum_{i =1}^k \sum_{j=1}^k \tr(v_i x_1 x_1^\dagger v_i^\dagger v_j x_2 x_2^\dagger v_j^\dagger)
\\ & =
\sum_{i =1}^k \sum_{j=1}^k   x_1^\dagger v_i^\dagger v_j x_2 x_2^\dagger v_j^\dagger v_i x_1
=
\sum_{i =1}^k \sum_{j=1}^k |x_1^\dagger v_i^\dagger v_j x_2|^2.
\end{align*}
Thus, $\tr(\varphi(x_1 x_1^\dagger) \varphi(x_2 x_2^\dagger)) =0$ iff $x_1^\dagger v_i^\dagger v_j x_2 = 0$, for all $1 \leq i, j \leq k$, which is equivalent to the condition $x_1^\dagger (\T_\varphi^\dagger \cdot \T_\varphi)x_2 = 0$. We have proved claim 4.

The proof of claim 5 is entirely similar. In this instance, we calculate that
\begin{align*}
\tr(x_2 x_2^\dagger\varphi(x_1 x_1^\dagger))
=
x_2^\dagger \varphi(x_1 x_1^\dagger) x_2
=
\sum_{i =1}^k x_2^\dagger v_i x_1 x_1^\dagger v_i^\dagger x_2
=
\sum_{i =1}^k |x_2^\dagger v_i x_1|^2.
\end{align*}

Claims 6 and 7 follow immediately from \cite[Theorem~8.4]{zbMATH07388954} and \cite[Theorem~9.2]{zbMATH07388954}, respectively.
\end{proof}

Quantum operations are a quantum generalization of substochastic maps. Indeed, if $\M$ and $\N$ are commutative, then $\M \iso \CC^X$ and $\N \iso \CC^Y$ for some finite sets $X$ and $Y$, and the quantum operations $\varphi\: \M \to \N$ are known to be in canonical one-to-one correspondence with substochastic maps $X \to Y$, i.e., with conditional subprobability distributions $p(y|x)$ with $x \in X$ and $y \in Y$. Quantum operations model probabilistic quantum channels where outputs may be discarded; the quantity $\tr(x_2 x_2^\dagger\varphi(x_1 x_1^\dagger))$ is the probability of finding the target system in state $x_2$ after preparing the source system in state $x_1$.

Quantum relations are of course a quantum generalization of relations. By design, if $\M$ and $\N$ are commutative, then $\M \iso \CC^X$ and $\N \iso \CC^Y$ for some finite sets $X$ and $Y$, and the quantum relations $\R\: \M \to \N$ are in canonical one-to-one correspondence with relations $X \to Y$. Quantum relations model possibilistic quantum channels where outputs may be discarded; the quantum relation $x_2^\dagger \R x_1 \subseteq \CC$ records the possibility or impossibility of finding the target system in state $x_2$ after preparing the source system in state $x_1$ according to whether it is equal to $\CC$ or to $\{0\}$.

\cref{K} yields a functor from the category $\cat{QO}$ of matrix $\dagger$-algebras and quantum operations to the category $\cat{QR}$ of matrix $\dagger$-algebras and quantum relations. This functor maps each matrix $\dagger$-algebra $\M$ to itself and maps each quantum operation $\varphi\: \M \to \N$ to the quantum relation $\T_\varphi$ of \cref{K2}; it is a functor by \cref{K3}. In effect, this functor maps each probabilistic quantum channel $\varphi$ to the corresponding possibilistic quantum channel $\T_\varphi$, which retains only the possibility or impossibility of each transition as its data; this is \cref{K5}. The quantum confusability graph \cite{zbMATH06727895} can then be simply computed from this possibilistic quantum channel $\T_\varphi$ as $\T_\varphi^\dagger \cdot \T_\varphi$; this is \cref{K4}.

The functor $\cat{QO} \to \cat{QR}$ that is given by $\varphi \mapsto \T_\varphi$ is full but not faithful, and it is furthermore strong symmetric monoidal for the obvious tensor product structures on the two categories. The verification of the many details that are implicit in these claims is laborious but routine. \cref{K} demonstrates that the quantum relation $\T_\varphi$ not only determines the zero-error capacity of $\varphi$ but also whether or not $\varphi$ is a homomorphism of quantum graph structures, as we now observe.

\begin{proposition}
Let $\M \subseteq M_m(\CC)$ and $\N \subseteq M_n(\CC)$ be matrix $\dagger$-algebras, and let $\varphi\: \M \to \N$ be a trace-nonincreasing completely positive map. Further, let $\R \subseteq M_m(\CC)$ and $\S \subseteq M_n(\CC)$ be quantum relations on $\M$ and $\N$, respectively. The following are equivalent:
\begin{enumerate}[ref = \thelemma(\arabic*)]
\item $\varphi$ is a CP morphism \cite[Definition 8.5]{zbMATH07388954} from $(\M, R)$ to $(\N, S)$,
\item $\T_\varphi \cdot \R \cdot \T_\varphi^\dagger \subseteq \S$.
\end{enumerate}
\end{proposition}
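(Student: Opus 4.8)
The plan is to read condition (2) as the assertion that the pushforward of $\R$ along $\varphi$ is contained in $\S$, and then to quote the explicit formula for that pushforward established above in \cref{K}. Recall that \cite[Definition~8.5]{zbMATH07388954} declares the quantum operation $\varphi$ to be a CP morphism from $(\M,\R)$ to $(\N,\S)$ precisely when the pushforward quantum relation $\overrightarrow{R}$ of $\R$ along $\varphi$ satisfies $\overrightarrow{R}\subseteq\S$; this is the evident quantum counterpart of the condition defining a substochastic graph homomorphism, under which one pushes the edge relation of the source forward along the map and demands that it land inside the edge relation of the target. So condition (1) is, by definition, the single inclusion $\overrightarrow{R}\subseteq\S$.

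It therefore remains only to substitute the value of $\overrightarrow{R}$. By \cref{K6}, which combines \cite[Theorem~8.4]{zbMATH07388954} with the preferred Kraus decomposition supplied by \cref{K3}, the pushforward of $\R$ along $\varphi$ is $\overrightarrow{R}=\T_\varphi\cdot\R\cdot\T_\varphi^\dagger$, and this is a genuine quantum relation on $\N$: it is an $\N'$-bimodule because $\T_\varphi$ already absorbs $\N'$ on the left and $\T_\varphi^\dagger$ absorbs it on the right by the definition of $\T_\varphi$ in \cref{K2}, and it is automatically closed, being a subspace of the finite-dimensional space $M_{n\times m}(\CC)$, so comparing it with $\S$ is unproblematic. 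Substituting this formula into $\overrightarrow{R}\subseteq\S$ turns condition (1) into condition (2), and conversely; the two are equivalent.

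The substance of the argument is thus concentrated in \cref{K6}, and the main thing to be careful about is the exact shape of \cite[Definition~8.5]{zbMATH07388954}: if one starts instead from a formulation of that definition in terms of transition probabilities $\tr\!\big(y y^\dagger\,\varphi(x x^\dagger)\big)$ and the induced relations on pure states, one should first use \cref{K5} (and \cref{K4} for the associated confusability graph) to re-express the reachability conditions through $\T_\varphi$, after which the same computation applies. It is worth noting that the pullback formula of \cref{K7} would instead produce the a priori different inclusion $\R\subseteq\T_\varphi^\dagger\cdot\S\cdot\T_\varphi$, which is not equivalent to (2) for general quantum operations — already $\varphi=0$ separates them — so it is genuinely the pushforward, and hence \cref{K6}, that governs this proposition.
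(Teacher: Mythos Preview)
Your proposal is correct and follows exactly the paper's approach: the proposition is immediate from \cref{K6}, since \cite[Definition~8.5]{zbMATH07388954} defines a CP morphism by the inclusion $\overrightarrow{R}\subseteq\S$ and \cref{K6} identifies $\overrightarrow{R}$ with $\T_\varphi\cdot\R\cdot\T_\varphi^\dagger$. The additional remarks you make about well-definedness and the contrast with the pullback are accurate but not needed for the proof itself.
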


\begin{proof}
This proposition follows immediately from \cref{K6}.
\end{proof}

As Weaver observes in \cite[section~9]{zbMATH07388954}, the condition $\R \subseteq \T_\varphi^\dagger \cdot \S \cdot \T_\varphi$ defines an alternative notion of a CP morphism, which is strictly weaker when $\varphi$ is trace-preserving. We now prove that when $\varphi\: \M \to \N$ is a trace-preserving $\dagger$-cohomomorphism, these two conditions are equivalent and define exactly the graph homomorphisms of \cref{A}.

\begin{definition}[cf.~{\cite[Definition~2.7]{zbMATH06936038}}]\label{L}
Let $\M \subseteq M_m(\CC)$ and $\N \subseteq M_n(\CC)$ be matrix $\dagger$-algebras. Then, a \emph{cohomomorphism} $\varphi\: \M \to \N$ is a linear map such that $\varphi^\dagger\: \N \to \M$ is an algebra homomorphism or, equivalently, such that
$$
\mu \circ \varphi = (\varphi \otimes \varphi) \circ \nu,
$$
where the linear maps $\mu\: \M \to \M \otimes \M$ and $\nu\: \N \to \N \otimes \N$ are the adjoints of the multiplication maps for $\M$ and $\N$, respectively.
It is a $\dagger$-cohomomorphism if furthermore $\varphi(a^\dagger) = \varphi(a)^\dagger$ for all $a \in \M$.
\end{definition}

The adjoint $\varphi^\dagger$ in \cref{L} is computed for the Hilbert-Schmidt inner products on $\M$ and $\N$. Any $\dagger$-cohomomorphism $\varphi$ is completely positive since its adjoint $\varphi^\dagger$ is a $\dagger$-homomorphism and hence completely positive.

\begin{remark}\label{M}
When $\M \subseteq M_m(\CC) \iso L(\CC^m)$ is in standard form \cite{zbMATH03476099}, $m = \dim \M$, and the trace on $M_m(\CC)$ restricts to the \emph{adjusted trace} \cite[Remark~1.4]{zbMATH07990859} or \emph{Markov trace} \cite[section~7]{zbMATH07856619} on $\M$, which normalizes to the unique tracial $\delta$-form on $\M$ \cite{zbMATH01742868}. In this case, $\mu\: \M \to \M \otimes \M$ is the comultiplication of $\M$ as a special unitary $\dagger$-Frobenius involution monoid \cite[Theorem~4.6]{zbMATH05909150}.
\end{remark}

To establish a connection between Weaver's CP morphisms \cite[Definition~8.5]{zbMATH07388954} and our homomorphisms of quantum graphs (\cref{A}), we show that the assignment $\varphi \mapsto \T_\varphi$ in \cref{K2} is a one-to-one correspondence between trace-preserving $\dagger$-cohomomorphisms and quantum functions \cite{arXiv:1101.1694}.

\begin{lemma}\label{N}
Let $\M \subseteq M_m(\CC)$ and $\N \subseteq M_n(\CC)$ be matrix $\dagger$-algebras, and let $\varphi\: \M \to \N$ be a trace-preserving $\dagger$-cohomomorphism. Then, 
$$\T_\varphi = \{v \in M_{n \times m}(\CC) \mid b v = v \varphi^\dagger(b) \text{ for all } b \in \N\}.$$
\end{lemma}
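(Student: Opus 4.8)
The plan is to first extract a convenient Kraus decomposition. By \cref{K1} and the trace-preservation of $\varphi$, one can choose matrices $v_1, \ldots, v_k \in M_{n \times m}(\CC)$ with $\varphi(a) = \sum_i v_i a v_i^\dagger$ for $a \in \M$, with $\sum_i v_i^\dagger v_i = 1_m$, and with $\varphi^\dagger(b) = \sum_i v_i^\dagger b v_i$ for all $b \in \N$; concretely, one takes a Kraus decomposition of $\varphi \circ \pi$, where $\pi$ is the conditional expectation onto $\M$ used in the proof of \cref{K}, and checks that $(\varphi \circ \pi)^\dagger$ restricts to $\varphi^\dagger$ on $\N$. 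Write $\F$ for the right-hand side of the asserted identity, and assemble the Kraus operators into the single operator $V\: \CC^m \to \CC^n \otimes \CC^k$ defined by $V x = \sum_i (v_i x) \otimes e_i$, where $e_1, \ldots, e_k$ is the standard basis of $\CC^k$. Then $V^\dagger V = \sum_i v_i^\dagger v_i = 1_m$, so $V$ is an isometry, and $V^\dagger (b \otimes 1_k) V = \varphi^\dagger(b)$ for all $b \in \N$.

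The crux is to show that every Kraus operator already lies in $\F$, i.e. that $b v_i = v_i \varphi^\dagger(b)$ for all $b \in \N$ and all $i$, which amounts to the single identity $(b \otimes 1_k) V = V \varphi^\dagger(b)$. Since $\varphi$ is a cohomomorphism, $\varphi^\dagger$ is multiplicative by \cref{L}, so using $(b \otimes 1_k)(c \otimes 1_k) = (bc) \otimes 1_k$ we obtain, for all $b, c \in \N$,
\[
V^\dagger (b \otimes 1_k)(1_{nk} - V V^\dagger)(c \otimes 1_k) V = \varphi^\dagger(bc) - \varphi^\dagger(b)\varphi^\dagger(c) = 0.
\]
Write $Q = 1_{nk} - V V^\dagger$, which is a projection because $V$ is an isometry, and $T_c = Q (c \otimes 1_k) V$. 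Then $T_c^\dagger T_b = V^\dagger (c^\dagger \otimes 1_k) Q (b \otimes 1_k) V = 0$ by the displayed identity applied with $c^\dagger \in \N$ in place of $b$; taking $c = b$ gives $T_b^\dagger T_b = 0$, hence $T_b = 0$, hence $(b \otimes 1_k) V = V V^\dagger (b \otimes 1_k) V = V \varphi^\dagger(b)$. Comparing $\CC^k$-components yields $b v_i = v_i \varphi^\dagger(b)$ for each $i$, and taking Hermitian adjoints together with $\varphi^\dagger(b)^\dagger = \varphi^\dagger(b^\dagger)$ (valid because $\varphi$ is a $\dagger$-cohomomorphism) and replacing $b$ by $b^\dagger$ gives also $v_i^\dagger b = \varphi^\dagger(b) v_i^\dagger$ for all $b \in \N$. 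I expect this paragraph to be the main obstacle; the remaining inclusions are routine.

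For $\T_\varphi \subseteq \F$: each $v_i$ lies in $\F$ by the previous paragraph, and $\F$ is a linear subspace closed under left multiplication by $\N'$ and right multiplication by $\M'$, since $b(b' v a') = b'(bv)a' = b' v \varphi^\dagger(b) a' = (b' v a')\varphi^\dagger(b)$ for $b \in \N$, $b' \in \N'$, $a' \in \M'$ (here $a'$ commutes with $\varphi^\dagger(b) \in \M$); hence every $b' v_i a'$ lies in $\F$, and $\T_\varphi$ is the span of such operators. For $\F \subseteq \T_\varphi$: given $v \in \F$, write $v = v \sum_i v_i^\dagger v_i = \sum_i (v v_i^\dagger) v_i$, and note that each $v v_i^\dagger$ lies in $\N'$, because $b(v v_i^\dagger) = (bv) v_i^\dagger = v \varphi^\dagger(b) v_i^\dagger = v v_i^\dagger b$ for all $b \in \N$, using $\varphi^\dagger(b) v_i^\dagger = v_i^\dagger b$. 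Thus $v$ is a combination of operators of the form $b' v_i$ with $b' \in \N'$, so $v \in \T_\varphi$. Combining the two inclusions proves $\T_\varphi = \F$.
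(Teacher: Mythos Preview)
Your proof is correct. The approaches differ meaningfully, though, and the comparison is worth recording.

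The paper's proof runs in the opposite direction from yours: it invokes \cite[Proposition~3.1]{arXiv:1101.1694} to produce operators $u_i$ that already satisfy the intertwining relation $b u_i = u_i \varphi^\dagger(b)$, and then verifies by a trace computation that (after multiplying by central projections) these operators form a Kraus decomposition of $\varphi$. You instead start from an arbitrary Kraus decomposition of $\varphi\circ\pi$ and prove the intertwining relation from scratch, via the Stinespring isometry $V$ and the standard ``$V^\dagger(b\otimes 1)Q(c\otimes 1)V = 0$'' argument that exploits multiplicativity of $\varphi^\dagger$. Your route is more self-contained---it does not appeal to an external reference for the key step---and it makes explicit why the cohomomorphism hypothesis is exactly what forces the Kraus operators into $\F$. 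The paper's route is shorter on the page because it outsources that step. For the reverse inclusion $\F \subseteq \T_\varphi$, the two arguments are essentially identical: both use $\F\cdot\F^\dagger \subseteq \N'$ together with $1_m \in \T_\varphi^\dagger\cdot\T_\varphi$ (equivalently $\sum_i v_i^\dagger v_i = 1_m$).

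One small point you leave implicit: the claim that $(\varphi\circ\pi)^\dagger$ restricted to $\N$ agrees with $\varphi^\dagger$ follows because $\pi^\dagger = \iota_\M$ (the conditional expectation $\pi$ is adjoint to the inclusion), so $(\varphi\circ\pi)^\dagger(b) = \iota_\M(\varphi^\dagger(b)) = \varphi^\dagger(b)$ for $b\in\N$. This is easy but deserves a sentence.
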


\begin{proof}
The $\dagger$-homomorphism $\varphi^\dagger\: \M \to \N$ is unital since $\varphi$ is trace-preserving. By \cite[Proposition~3.1]{arXiv:1101.1694}, there exist matrices $u_1, \ldots, u_r \in M_{n \times m}(\CC)$ such that $b u_i = u_i \varphi^\dagger(b)$ and
$
\varphi^\dagger(b) = \sum_{i =1}^r u_i^\dagger b u_i
$
for all $b \in \N$. Enumerating the minimal central projections $p_1, \ldots, p_s$ and $q_1, \ldots, q_t$ of $\M$ and $\N$, respectively, we define $v_{ijk} = q_k u_i p_j$ and obtain
$$
\varphi^\dagger(b) = \sum_{i=1}^r \sum_{j = 1}^s \sum_{k= 1}^t v_{ijk}^\dagger b v_{ijk}
$$
for all $b \in \N$, since $\sum_{j=1}^s p_j a p_j = a$ for $a \in \M$ and $\sum_{k = 1}^t q_k b q_k = b$ for $b \in \N$.

The standard calculation
$$
\tr(a \varphi^\dagger(b)) = \sum_{i=1}^r \sum_{j = 1}^s \sum_{k= 1}^t \tr(av_{ijk}^\dagger b v_{ijk}) = \sum_{i=1}^r \sum_{j = 1}^s \sum_{k= 1}^t \tr( v_{ijk} a v_{ijk}^\dagger b)
$$
now shows that
$$
\varphi(a) = \sum_{i=1}^r \sum_{j = 1}^s \sum_{k= 1}^t v_{ijk} a v_{ijk}^\dagger
$$
for all $a \in \M$. Since $\varphi^\dagger$ is unital, we have that $ \sum_{i=1}^r \sum_{j = 1}^s \sum_{k= 1}^t v_{ijk}^\dagger v_{ijk} = 1_m$. The matrices $v_{ijk}$ certainly satisfy $b v_{ijk} = v_{ijk} \varphi^\dagger(b)$, so we conclude that $$\T_\varphi \subseteq \F : = \{v \in M_{n \times m}(\CC) \mid b v = v \varphi^\dagger(b) \text{ for all } b \in \N\}.$$

We prove the opposite inclusion $\F \subseteq \T_\varphi$ by reasoning that
$$
\F = \F \cdot \M' \subseteq \F \cdot \T_\varphi^\dagger \cdot \T_\varphi \subseteq \F \cdot \F^\dagger \cdot \T_\varphi \subseteq \N' \cdot \T_\varphi = \T_\varphi,
$$
where $\M' \subseteq \T_\varphi^\dagger \cdot \T_\varphi$ because $1_m \in \T_\varphi^\dagger \cdot  \T_\varphi$ and $\F \cdot \F^\dagger \subseteq \N'$ because $$b v_1 v_2^\dagger = v_2 \varphi^\dagger(b) v_2^\dagger = v_1 v_2^\dagger b$$ for all $v_1, v_2 \in \F$ and $b \in \N$.
Therefore $\T_\varphi = \F$, as claimed.
\end{proof}

We conclude the article by proving that, in the context of finite quantum graphs, our homomorphisms $G \to H$ correspond to those trace-preserving $\dagger$-cohomomorphisms $\ell(\V_G) \to \ell(\V_H)$ that are CP morphisms in either of Weaver's senses \cite[section~9]{zbMATH07388954}, which become equivalent.

\begin{theorem}\label{O}
Let $\M\subseteq M_m(\CC)$ and $\N\subseteq M_n(\CC)$ be matrix $\dagger$-algebras, and let $\R \subseteq M_m(\CC)$ and $\S \subseteq M_n(\CC)$ be quantum relations on $\M$ and $\N$, respectively. We have a canonical one-to-one-to-one correspondence between the following:
\begin{enumerate}[ref = \thelemma(\arabic*)]
\item\label[theorem]{O1} quantum relations $\F$ from $\M$ to $\N$ such that
$$
\M' \subseteq \F^\dagger \cdot \F, \qquad \F \cdot \F^\dagger \subseteq \N', \qquad \F \cdot \R \subseteq \S \cdot \F,
$$
\item\label[theorem]{O2} trace-preserving $\dagger$-cohomomorphisms $\varphi\: \M \to \N$ such that $\overrightarrow{\R} \subseteq \S$,
\item\label[theorem]{O3} trace-preserving $\dagger$-cohomomorphisms $\varphi\: \M \to \N$ such that $\R \subseteq \overleftarrow{\S}$.
\end{enumerate}
It is given by $\F = \T_\varphi$. The pushforward quantum relation $\overrightarrow{\R}$ and the pullback quantum relation $\overleftarrow{\S}$ are both calculated along $\varphi$ \cite[Definitions~8.2~and~9.1]{zbMATH07388954}.
\end{theorem}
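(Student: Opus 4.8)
The plan is to factor the triple correspondence through the notion of a quantum function, using \cref{N} as the bridge and the module identities of \cref{K} for everything else. First I would show that $\varphi\mapsto\T_\varphi$ is a bijection from the trace-preserving $\dagger$-cohomomorphisms $\varphi\:\M\to\N$ onto the quantum relations $\F$ from $\M$ to $\N$ satisfying just the first two conditions of \cref{O1}, that is, $\M'\subseteq\F^\dagger\cdot\F$ and $\F\cdot\F^\dagger\subseteq\N'$. These two conditions say exactly that $\F$ is a \emph{quantum function} $\M\to\N$ in the operator-space incarnation of $\cat{qSet}$ recalled in section~\ref{section 2}, where $\Id_\M$ is represented by $\M'$, composition of relations is the product of operator spaces, and the dagger is the Hermitian adjoint. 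By \cref{N}, for a trace-preserving $\dagger$-cohomomorphism $\varphi$ the space $\T_\varphi$ equals $\{v : bv = v\varphi^\dagger(b)\text{ for all }b\in\N\}$, which is precisely the quantum relation associated to the unital normal $\dagger$-homomorphism $\varphi^\dagger\:\N\to\M$ (trace-preservation of $\varphi$ being equivalent to unitality of $\varphi^\dagger$). Since $\pi\mapsto\{v:bv=v\pi(b)\}$ is a bijection between unital normal $\dagger$-homomorphisms $\N\to\M$ and quantum functions $\M\to\N$ \cite{arXiv:1101.1694}, and $\pi\mapsto\pi^\dagger$ is a bijection between unital normal $\dagger$-homomorphisms $\N\to\M$ and trace-preserving $\dagger$-cohomomorphisms $\M\to\N$ (\cref{L}), the composite $\varphi\mapsto\T_\varphi$ is the asserted bijection; that $\T_\varphi$ is a quantum relation from $\M$ to $\N$ in the first place is immediate from \cref{K2}.

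Next, fixing such a $\varphi$ and writing $\F=\T_\varphi$, I would use \cref{K6} and \cref{K7} to rewrite the pushforward and pullback as $\overrightarrow{\R}=\F\cdot\R\cdot\F^\dagger$ and $\overleftarrow{\S}=\F^\dagger\cdot\S\cdot\F$, and then check that the three remaining conditions
$$
\F\cdot\R\subseteq\S\cdot\F,\qquad \overrightarrow{\R}\subseteq\S,\qquad \R\subseteq\overleftarrow{\S}
$$
are mutually equivalent. Each implication is a one-line manipulation using only $\M'\subseteq\F^\dagger\cdot\F$, $\F\cdot\F^\dagger\subseteq\N'$, and the quantum-relation axioms $\M'\cdot\R\cdot\M'\subseteq\R$ and $\N'\cdot\S\cdot\N'\subseteq\S$. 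For example, from $\F\cdot\R\subseteq\S\cdot\F$ one gets $\overrightarrow{\R}=\F\cdot\R\cdot\F^\dagger\subseteq\S\cdot\F\cdot\F^\dagger\subseteq\S\cdot\N'\subseteq\S$ and $\R=\M'\cdot\R\subseteq\F^\dagger\cdot\F\cdot\R\subseteq\F^\dagger\cdot\S\cdot\F=\overleftarrow{\S}$; conversely $\overrightarrow{\R}\subseteq\S$ gives $\F\cdot\R=\F\cdot\R\cdot\M'\subseteq\F\cdot\R\cdot\F^\dagger\cdot\F=\overrightarrow{\R}\cdot\F\subseteq\S\cdot\F$, and $\R\subseteq\overleftarrow{\S}$ gives $\F\cdot\R\subseteq\F\cdot\F^\dagger\cdot\S\cdot\F\subseteq\N'\cdot\S\cdot\F\subseteq\S\cdot\F$. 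Combined with the bijection from the previous paragraph, this shows that $\varphi\mapsto\T_\varphi$ restricts to a bijection from the maps of \cref{O2} onto the relations of \cref{O1} and, equally, from the maps of \cref{O3} onto the relations of \cref{O1}; in particular \cref{O2} and \cref{O3} single out the same maps, which is exactly the claimed one-to-one-to-one correspondence.

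I expect the only real friction to lie in the first step: carefully matching \cref{N} against the established dictionary relating quantum functions, unital normal $\dagger$-homomorphisms, and Hilbert--Schmidt adjoints, and in particular confirming that under the standing hypotheses $\varphi^\dagger$ is genuinely a \emph{unital} $\dagger$-homomorphism, so that $\T_\varphi$ does satisfy the two inclusions of \cref{O1} and, conversely, every quantum function arises as some $\T_\varphi$. Once \cref{K} is in hand, the second step is purely formal.
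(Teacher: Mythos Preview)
Your proposal is correct and follows essentially the same route as the paper: first use \cref{N} together with \cite[Theorem~3.5]{arXiv:1101.1694} to identify trace-preserving $\dagger$-cohomomorphisms with quantum functions via $\varphi\mapsto\T_\varphi$, then use \cref{K6}--\cref{K7} and the inclusions $\M'\subseteq\F^\dagger\cdot\F$, $\F\cdot\F^\dagger\subseteq\N'$ to cycle among the three compatibility conditions. Your write-up of the second step is in fact more explicit than the paper's, which simply asserts that the equivalences ``follow directly.''
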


\begin{proof}
Together, \cref{N} and \cite[Theorem~3.5]{arXiv:1101.1694} establish that the assignment $\varphi \mapsto \T_\varphi$ is a one-to-one correspondence between trace-preserving $\dagger$-coho\-mo\-mor\-phisms $\varphi\: \M \to \N$ and quantum relations $\F$ from $\M$ to $\N$ such that $\M' \subseteq \F^\dagger \cdot \F$ and $\F \cdot \F^\dagger \subseteq \N'$. The conditions $\overrightarrow{\R} \subseteq \S$ and $\R \subseteq \overleftarrow{\S}$ are equivalent to $\T_\varphi \cdot \R \cdot \T_\varphi^\dagger \subseteq \S$ and $\R \subseteq \T_\varphi^\dagger \cdot \S \cdot \T_\varphi$, respectively, by \cref{K}, so it remains to observe that
$$
\R \subseteq \T_\varphi^\dagger \cdot \S \cdot \T_\varphi
\qquad \Longleftrightarrow \qquad
\T_\varphi \cdot \R \subseteq \S \cdot \T_\varphi
\qquad \Longleftrightarrow \qquad
\R \subseteq \T_\varphi^\dagger \cdot \S \cdot \T_\varphi.
$$
These equivalences follow directly from $\M' \subseteq \T_\varphi^\dagger \cdot T_\varphi$ and $\T_\varphi \cdot \T_\varphi^\dagger \subseteq \N'$.
\end{proof}

A trace-preserving $\dagger$-homomorphism $\M \to \N$ is a \emph{quantum channel} in the sense of being a trace-preserving completely positive map. In order to provide an interpretation of the morphisms in \cref{O} within quantum information theory, we appeal to the fact that every reflexive, symmetric quantum relation $\R$ on $\M$ is the confusability quantum graph of a quantum channel in the sense of \cref{K4}. This was established in greater generality by Verdon \cite[Proposition~3.12]{zbMATH07809225}, answering a question of Daws \cite[section~6.2]{zbMATH07856619}. We provide a short proof of this fact.
\begin{proposition}\label{P}
Let $\M \subseteq M_m(\CC)$ be a $\dagger$-matrix algebra, and let $\R \subseteq M_m(\CC)$ be a reflexive, symmetric quantum relation on $\M$ \cite[Definition~2.4(d)]{zbMATH06008057}. Then, there exist a positive integer $n$ and a trace-preserving completely positive map $\varphi\: \M \to M_n(\CC)$ such that $\R = \T_\varphi^\dagger \cdot \T_\varphi$.
\end{proposition}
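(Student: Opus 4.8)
The plan is to construct the channel $\varphi$ by hand from a self-adjoint spanning set of $\R$, exploiting the reflexivity of $\R$ and the description of $\T_\varphi^\dagger \cdot \T_\varphi$ that falls out of \cref{K2}. First I would record what the hypotheses give: reflexivity means $\M' \subseteq \R$, so $1_m \in \R$, and since $1_m \in \M'$ and $\M' \cdot \R \cdot \M' \subseteq \R$ we in fact get $\M' \cdot \R \cdot \M' = \R$; symmetry $\R^\dagger = \R$ means $\R$ is spanned by its self-adjoint elements, so after rescaling I may fix self-adjoint $r_1, \dots, r_d \in \R$ with $\|r_\ell\| \le 1$ and $\mathrm{span}\{r_1, \dots, r_d\} = \R$, taking $d \ge 1$.

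Next I would manufacture Kraus operators one generator at a time. For each $\ell$, the block matrix $\left(\begin{smallmatrix} 1_m & r_\ell \\ r_\ell & 1_m\end{smallmatrix}\right) \in M_{2m}(\CC)$ is positive semidefinite, its Schur complement being $1_m - r_\ell^2 \ge 0$; factoring it as $W_\ell^\dagger W_\ell$ with $W_\ell = \left(\begin{smallmatrix} 1_m & r_\ell \\ 0 & \sqrt{1_m - r_\ell^2}\end{smallmatrix}\right)$ and taking $w_\ell, w'_\ell \in M_{2m \times m}(\CC)$ to be the two columns of $W_\ell$ yields $w_\ell^\dagger w_\ell = w_\ell'^\dagger w'_\ell = 1_m$ and $w_\ell^\dagger w'_\ell = w_\ell'^\dagger w_\ell = r_\ell$. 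I would then segregate these pairs into mutually orthogonal output blocks: set $n = 2md$, identify $\CC^n = \bigoplus_{\ell = 1}^d \CC^{2m}$ with $\iota_\ell \: \CC^{2m} \to \CC^n$ the $\ell$-th inclusion, put $v_{\ell, 0} = (2d)^{-1/2}\,\iota_\ell w_\ell$ and $v_{\ell, 1} = (2d)^{-1/2}\,\iota_\ell w'_\ell$, and define $\varphi \: \M \to M_n(\CC)$ by $\varphi(a) = \sum_{\ell = 1}^d \sum_{i = 0}^1 v_{\ell, i}\, a\, v_{\ell, i}^\dagger$.

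It then remains to verify three things. The map $\varphi$ is completely positive because it is presented in Kraus form, and it is trace-preserving because $\sum_{\ell, i} v_{\ell, i}^\dagger v_{\ell, i} = (2d)^{-1}\sum_\ell (w_\ell^\dagger w_\ell + w_\ell'^\dagger w'_\ell) = 1_m$; in particular $\{v_{\ell, i}\}$ is a Kraus decomposition of $\varphi$ in the sense of \cref{K1}. Since $\iota_\ell^\dagger \iota_{\ell'} = 0$ for $\ell \neq \ell'$, the only nonzero products $v_{\ell, i}^\dagger v_{\ell', i'}$ are those with $\ell = \ell'$, and these equal $(2d)^{-1}$ times $1_m$ or $r_\ell$; therefore $\mathrm{span}\{v_{\ell, i}^\dagger v_{\ell', i'}\} = \mathrm{span}(\{1_m\} \cup \{r_1, \dots, r_d\}) = \R$, using $1_m \in \R$. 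Finally, because $\N = M_n(\CC)$ forces $\N' = \CC 1_n$, evaluating the formula of \cref{K2} on this Kraus decomposition gives $\T_\varphi = \mathrm{span}\{v_{\ell, i}\} \cdot \M'$, and hence $\T_\varphi^\dagger \cdot \T_\varphi = \M' \cdot \mathrm{span}\{v_{\ell, i}^\dagger v_{\ell', i'}\} \cdot \M' = \M' \cdot \R \cdot \M' = \R$, which is the assertion.

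I expect the only step that is not bookkeeping to be the one that dictates the block structure: \emph{every} product $v_i^\dagger v_j$ has to land in $\R$, not merely the ones we designed to equal $1_m$ or $r_\ell$, and since $\R$ need not be closed under multiplication, any economizing choice in which two distinct Kraus operators share part of their output space would generate stray cross terms such as $r_\ell r_{\ell'}$ that have no reason to lie in $\R$. Isolating each pair $(w_\ell, w'_\ell)$ in its own summand $\CC^{2m}$ is precisely what removes those cross terms; everything else is routine.
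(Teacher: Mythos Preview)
Your proof is correct, and it takes a genuinely different route from the paper's. The paper does not build the channel by hand: it invokes the known special case $\M = M_m(\CC)$ \cite[Lemma~2]{arXiv:0906.2527} to obtain a trace-preserving completely positive $\psi\: M_m(\CC) \to M_n(\CC)$ with $\T_\psi^\dagger \cdot \T_\psi = \R$, then sets $\varphi = \psi \circ \iota$, where $\iota\: \M \hookrightarrow M_m(\CC)$ is the inclusion; the result follows from the functoriality of $\varphi \mapsto \T_\varphi$ and the observation $\T_\iota = \M'$, so that $\T_\varphi^\dagger \cdot \T_\varphi = \M' \cdot \R \cdot \M' = \R$. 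Your argument, by contrast, unpacks that cited special case explicitly via the block factorization $\left(\begin{smallmatrix}1_m & r_\ell\\ r_\ell & 1_m\end{smallmatrix}\right) = W_\ell^\dagger W_\ell$ and the orthogonal-summand trick, and then absorbs the passage to general $\M$ directly into the definition of $\T_\varphi$ from \cref{K2} rather than through a composition. The paper's version is shorter and highlights the functoriality established earlier in the section; yours is self-contained and makes the Kraus operators concrete, and your closing remark about why the output blocks must be kept orthogonal is a nice explanation of the only nontrivial design choice.
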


\begin{proof}
In the case $\M = M_m(\CC)$, this proposition is widely known \cite[Lemma~2]{arXiv:0906.2527}. Thus, we obtain a trace-preserving completely positive map $\psi\: M_m(\CC) \to M_n(\CC)$ such that $\T_\psi^\dagger \cdot \T_\psi = \R$. Let $\iota\: \M \to M_m(\CC)$ be the inclusion map, which is also trace-preserving and completely positive. The Kraus decomposition of $\iota$ is given by $\iota(a) = 1_m a 1_m$, so $\T_\iota = \M'$. Let $\varphi = \psi \circ \iota$. The functoriality of the assignment $\varphi \mapsto \T_\varphi$ implies that
$
\T_\varphi^\dagger \cdot \T_\varphi = \T_\iota^\dagger \cdot \T_\psi^\dagger \cdot \T_\psi \cdot \T_\iota = \M' \cdot \R \cdot \M' = \R.
$
\end{proof}

We now interpret the morphisms in \cref{P} from the perspective of quantum information theory.

\begin{remark}\label{Q}
Let $\M$ and $\N$ be matrix $\dagger$-algebras that model finite physical systems, and let $\R$ and $\S$ be quantum relations on $\M$ and $\N$, respectively, that model confusability structures on these systems. Without loss of generality, these confusability structures arise from quantum channels that are modeled by trace-preserving completely positive maps $\chi\: \M \to M_{s}(\CC)$ and $\psi\: \N \to M_t(\CC)$ (\cref{P}); $\R = \T_\chi^\dagger \cdot \T_\chi$ and $\S = \T_\psi^\dagger \cdot \T_\psi$. Recall that a quantum channel confuses two states if there is a non-zero probability of transition between them after transmission through the channel \cite{zbMATH06727895}, as in \cref{K4}.

A quantum channel $\varphi\: \M \to \N$ is a trace-preserving $\dagger$-cohomomorphism iff it is adjoint to a unital $\dagger$-homomorphism iff it never increases adjusted entropy \cite[Theorem~1.1]{zbMATH07990859}. Furthermore, it satisfies the equivalent (\cref{O}) conditions $\R \subseteq \overleftarrow{\S}$ and $\overrightarrow{\R} \subseteq {\S}$ iff
$$
\T_\chi^\dagger \cdot \T_\chi \subseteq \T_\varphi^\dagger \cdot \T_\psi^\dagger \cdot \T_\psi \cdot \T_\varphi = \T_{\psi \circ \varphi}^\dagger \cdot \T_{\psi \circ \varphi}.
$$
Thus, quantum relations $\F$ satisfying \cref{O1} correspond to quantum channels $\varphi\:\M \to \N$ such that
\begin{enumerate}
\item $\varphi$ is deterministic for adjusted entropy \cite{zbMATH07990859},
\item $\psi \circ \varphi$ confuses two states of $\M$ whenever $\chi$ confuses them \cite{zbMATH06727895}.
\end{enumerate}
\end{remark}

Of course, quantum relations $\F$ satisfying \cref{O1} are just homomorphisms $\Phi$ using a different construction of $\cat{qRel}$. Thus, \cref{Q} provides a quantum-information-theoretic gloss of $\cat{qGph}$. We formalize this translation in terms of the noncommutative function spaces
$$
\ell(\X) = \prod_{X \in \At(\X)} L(X).
$$
When $\X$ is finite, $\ell(\X) = \ell^\infty(\X) = \ell^1(\X)$, so we use this notation only to emphasize the coincidence between $\ell^\infty(\X)$ and $\ell^1(\X)$ that occurs in \cref{R}. In the general case, quantum relations are defined on von Neumann algebras, while quantum operations are defined on their preduals.

\begin{corollary}\label{R}
Let $G$ and $H$ be finite quantum graphs. Let $\R_G$ be the quantum relation on $\ell(\V_G)$ that is defined by
$$
\R_G = \bigoplus_{X, Y \in \At(\V_\G)} E_G(X, Y),
$$
and let $\R_H$ be defined similarly. There is a one-to-one correspondence between 
\begin{enumerate}[ref = \thelemma(\arabic*)]
\item homomorphisms $\Phi\: G \to H$ (\cref{A}),
\item trace-preserving $\dagger$-cohomomorphisms $\varphi\: \ell(\V_G) \to \ell(\V_H)$ that are CP morphisms for the quantum relations $\R_G$ and $\R_H$ \cite[Definition 8.5]{zbMATH07388954}.
\end{enumerate}
\end{corollary}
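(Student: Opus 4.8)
The plan is to deduce the corollary directly from \cref{O}, specialized to $\M = \ell(\V_G)$, $\N = \ell(\V_H)$, $\R = \R_G$, $\S = \R_H$, by recognizing the quantum relations $\F$ satisfying \cref{O1} as exactly the homomorphisms $G \to H$ of \cref{A}, once both are read through the first (von~Neumann-algebraic) incarnation of $\cat{qRel}$ recalled in section~\ref{section 2}.

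First I would set things up concretely. Since $G$ and $H$ are finite, $\ell(\V_G) = \bigoplus_{X \in \At(\V_G)} L(X)$ and $\ell(\V_H) = \bigoplus_{Y \in \At(\V_H)} L(Y)$ are finite-dimensional von Neumann algebras; I present them as matrix $\dagger$-algebras via the block-diagonal embeddings $\ell(\V_G) \subseteq M_m(\CC)$ with $m = \sum_X \dim X$, acting on $\CC^m = \bigoplus_X X$, and $\ell(\V_H) \subseteq M_n(\CC)$, acting on $\CC^n = \bigoplus_Y Y$. These are the embeddings for which the ambient matrix trace restricts to the canonical trace on $\ell(\V_G) = \ell^1(\V_G)$ and $\ell(\V_H) = \ell^1(\V_H)$, so ``trace-preserving'' in \cref{O} has the intended meaning here; moreover $\R_G$ and $\R_H$, as written in the statement, are quantum relations on $\ell(\V_G)$ and $\ell(\V_H)$, and symmetric ones by \cref{A}. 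In this presentation $\ell(\V_G)' = \bigoplus_X \CC 1_X$ and $\ell(\V_H)' = \bigoplus_Y \CC 1_Y$, and an ultraweakly closed subspace $\F \subseteq \B(\bigoplus_X X, \bigoplus_Y Y)$ satisfies $\ell(\V_H)' \cdot \F \cdot \ell(\V_G)' \subseteq \F$ iff it is block-diagonal, $\F = \bigoplus_{X,Y} \F(X,Y)$ with $\F(X,Y) = 1_Y \F 1_X \leq L(X,Y)$. Thus a quantum relation from $\ell(\V_G)$ to $\ell(\V_H)$ is precisely a relation $\V_G \to \V_H$ in the sense of section~\ref{section 2}; this is just the finite case of the equivalence between the two incarnations of $\cat{qRel}$, and under it products $\F \cdot \R$ of quantum relations correspond to composites $\F \circ \R$ in $\cat{qRel}$, the dagger corresponds to the dagger, inclusion to the order $\leq$, and the identity quantum relation $\ell(\V_G)'$ to $\Id_{\V_G}$. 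In particular, $\R_G$ corresponds to $E_G$ and $\R_H$ to $E_H$.

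Next I would run the translation. With this dictionary, the conditions $\M' \subseteq \F^\dagger \cdot \F$, $\F \cdot \F^\dagger \subseteq \N'$, $\F \cdot \R_G \subseteq \R_H \cdot \F$ of \cref{O1} become $\F^\dagger \circ \F \geq \Id_{\V_G}$, $\F \circ \F^\dagger \leq \Id_{\V_H}$, $\F \circ E_G \leq E_H \circ \F$; the first two say that $\F$ is a function $\V_G \to \V_H$, the third that it is a homomorphism $G \to H$. So \cref{O1} describes exactly the homomorphisms of \cref{A} --- that is clause~(1) of the corollary. On the other hand, clause~(2) is exactly \cref{O2}: a trace-preserving $\dagger$-cohomomorphism $\varphi$ is trace-nonincreasing and completely positive (the latter since $\varphi^\dagger$ is a $\dagger$-homomorphism), so by \cref{K6} (equivalently, by the proposition preceding \cref{L}) the assertion that $\varphi$ is a CP morphism from $(\ell(\V_G), \R_G)$ to $(\ell(\V_H), \R_H)$ is the condition $\overrightarrow{\R_G} = \T_\varphi \cdot \R_G \cdot \T_\varphi^\dagger \subseteq \R_H$ appearing in \cref{O2}. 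Then \cref{O} supplies the one-to-one correspondence between \cref{O1} and \cref{O2}, namely $\F = \T_\varphi$, and composing it with the block-decomposition bijection yields the asserted correspondence $\Phi \leftrightarrow \varphi$.

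The hard part will be the bookkeeping in the middle step: checking that the block-decomposition identification of the two incarnations of $\cat{qRel}$ is compatible with composition, dagger, order, and identities, and that $\R_G$ as literally written is the image of $E_G$ under it. This is routine given section~\ref{section 2}, and once it is pinned down the corollary is nothing more than \cref{O} read through that identification, with the CP-morphism clause handled by \cref{K6}.
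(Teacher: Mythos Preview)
Your proposal is correct and takes essentially the same approach as the paper: invoke \cref{O} and translate via the equivalence between the two incarnations of $\cat{qRel}$. The only difference is that the paper outsources this translation to an external reference (\cite[Proposition~A.2.2]{zbMATH07828321}), whereas you spell out the block-diagonal dictionary and its compatibility with composition, dagger, order, and identities by hand; this makes your argument more self-contained but is not a different idea.
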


\begin{proof}
This is an immediate consequence of \cref{O} and \cite[Proposition~A.2.2]{zbMATH07828321}, which expresses the equivalence of the two constructions of $\cat{qRel}$.
\end{proof}

\begin{remark}\label{S}
The category $\cat{qRel}$ is dagger compact \cite[Theorem~3.6]{zbMATH07287276}, and this allows a reformulation of \cref{A}. We may equivalently define a quantum graph $G$ to be a quantum set $\V_G$ together with a relation $\bend E_G \: \V_G \times \V_G^* \to \mathbf 1$ such that $\bend E_G = \bend E_{G*} \circ B_{G, G^*}$, where $B_{G, G^*}\: \V_G \times \V_{G}^* \to \V_{G}^* \times V_G$ is the braiding \cite[section~3]{zbMATH07828321}. In this notation, a function $\Phi\: \V_G \to \V_H$ is a homomorphism iff $\bend E_G \leq \bend E_H \circ (\Phi \times \Phi_*)$. Appealing to \cite[Theorems~7.4~and~B.8]{zbMATH07287276}, we conclude that $\cat{qGph}$ is equivalent to the following category:
\begin{enumerate}
\item an object is a hereditarily atomic von Neumann algebra $\M$ with a projection $p \in \M \mathbin{\overline\otimes} \M^{op}$ such that $\sigma(p) = p$, where $\sigma\: \M \mathbin{\overline\otimes} \M^{op} \to \M \mathbin{\overline\otimes} \M^{op}$ is the unital normal $\dagger$-antihomomorphism mapping $a_1 \mathbin{\otimes} a_2$ to $a_2 \mathbin{\otimes} a_1$,
\item a morphism from $(\M, p)$ to $(\N, q)$ is a unital normal $\dagger$-homomorphism $\pi\: \N \to \M$ such that $p \leq (\pi \mathbin{\overline\otimes} \pi)(q)$.
\end{enumerate}
\end{remark}

\appendix

\section{Classical quantum graphs}\label{appendix A}

In this appendix, we prove \cref{H}, which concerns the adjunction between the functors $\Inc\:\cat{Gph} \to \cat{qGph}$ and $\cat{qGph}(K_1, -) \: \cat{qGph} \to \cat{Gph}$ of \cref{G}.

\begin{definition}
A quantum set $\X$ is said to be \emph{classical} if every atom of $X$ is one-dimensional \cite[Proposition~10.6]{zbMATH07287276}. We define $\cat{cSet}$ to be the full subcategory of $\cat{qSet}$ whose objects are classical quantum sets. Similarly, we define $\cat{cGph}$ to be the full subcategory of $\cat{qGph}$ whose objects are quantum graphs $G$ such that $\V_G$ is classical. The functor $\Inc\: \cat{Gph} \to \cat{cGph}$ is an equivalence of categories.
\end{definition}

Clearly, every quantum set $\X$ has a largest classical subset \cite[Definition~2.2]{zbMATH07287276}. Formally, we define the \emph{classical part} of $\X$ to be the quantum set $\Cl(\X)$ whose atoms are exactly the one-dimensional atoms of $\X$. Being a subset of $\X$, the quantum set $\Cl(\X)$ is the range \cite[Definition~3.2]{zbMATH07605379} of the associated \emph{inclusion function} $J\: \Cl(\X) \to \X$ \cite[Definition~8.2]{zbMATH07287276}. We record the familiar universal property of this inclusion function.

\begin{lemma}\label{T}
Let $\X$ and $\Y$ be quantum sets, and let $F\: \Y \to \X$ be a function. If $\Y$ is classical, then $F$ factors uniquely through the inclusion function $J\: \Cl(\X) \to \X$.
$$
\begin{tikzcd}
\Y
\arrow{rd}{F}
\arrow[dotted]{d}[swap]{!}
&
\\
\Cl(\X)
\arrow{r}[swap]{J}
&
\X
\end{tikzcd}
$$
\end{lemma}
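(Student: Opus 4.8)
The plan is to reduce the lemma to the claim that the range of $F$ is a classical quantum set. Uniqueness of the factorization is immediate: $J\colon\Cl(\X)\to\X$ is the inclusion of a subset, hence a monomorphism in $\cat{qSet}$, so any two functions $\Y\to\Cl(\X)$ whose composites with $J$ equal $F$ must agree. For existence, recall that $F$ factors as a surjection $\overline F\colon\Y\to\mathrm{ran}(F)$ followed by the inclusion of a subset $\mathrm{ran}(F)\hookrightarrow\X$ --- this is the same surjection/inclusion factorization used in the proof of \cref{D}. So it suffices to show that $\At(\mathrm{ran}(F))\subseteq\At(\Cl(\X))$; then postcomposing $\overline F$ with the inclusion $\mathrm{ran}(F)\hookrightarrow\Cl(\X)$ produces the desired factorization of $F$ through $J$.

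Thus, suppose $X\in\At(\X)$ and $Y\in\At(\Y)$ satisfy $F(Y,X)\neq 0$, and choose $0\neq v\in F(Y,X)\leq L(Y,X)$. Since $\Y$ is classical, $\dim Y=1$, so $v$ has rank one and $vv^\dagger\in L(X)$ is a nonzero positive operator of rank one. Since $F$ is a function, $F\circ F^\dagger\leq\Id_\X$; evaluating this inequality of relations at the pair $(X,X)$ gives $(F\circ F^\dagger)(X,X)=\sum_{Y'\in\At(\Y)}F(Y',X)\cdot F(Y',X)^\dagger\leq\Id_\X(X,X)=\mathrm{span}(1_X)$, so in particular $vv^\dagger\in\mathrm{span}(1_X)$. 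A nonzero scalar multiple of $1_X$ that has rank one forces $1_X$ to have rank one, i.e.\ $\dim X=1$, so $X\in\At(\Cl(\X))$. This shows $\mathrm{ran}(F)$ is classical and completes the argument.

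The proof is short, and its only real content is the rank computation above. The point to get right is that it is the single-valuedness inequality $F\circ F^\dagger\leq\Id_\X$, rather than the totality inequality $F^\dagger\circ F\geq\Id_\Y$, that does the work, together with the routine evaluation of the composite relation $F\circ F^\dagger$ on the diagonal atom $(X,X)$. An alternative route avoids the notion of range entirely: the same rank computation shows $F(Y,X)=0$ whenever $\dim X>1$, so one may define $\overline F\colon\Y\to\Cl(\X)$ by restricting $F$ to the one-dimensional atoms of $\X$, and then verify $J\circ\overline F=F$ and that $\overline F$ is a function, using $J^\dagger\circ J=\Id_{\Cl(\X)}$ and monotonicity of composition in $\cat{qRel}$.
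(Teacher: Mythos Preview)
Your argument is correct, and it takes a genuinely different route from the paper's proof. The paper simply invokes the duality between $\cat{qSet}$ and hereditarily atomic von Neumann algebras \cite[Theorem~7.4]{zbMATH07287276} together with the characterization of inclusion functions under that duality \cite[Lemma~8.3]{zbMATH07287276}; in effect, a function out of a classical quantum set corresponds to a unital normal $\dagger$-homomorphism into a commutative von Neumann algebra, and such a homomorphism must factor through the commutative quotient of its domain. Your proof, by contrast, stays entirely inside the atom picture of $\cat{qRel}$: you exploit the single-valuedness condition $F\circ F^\dagger\leq\Id_\X$ componentwise to force any atom $X$ hit by $F$ to satisfy $vv^\dagger\in\CC\cdot 1_X$ for a rank-one operator $v$, hence $\dim X=1$. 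This is more elementary and self-contained, requiring no operator-algebraic input, and it makes visible exactly which half of the function axioms is responsible for the result. The paper's approach is shorter on the page but leans on heavier machinery; yours would be preferable in a treatment aiming to develop quantum sets axiomatically without reference to von Neumann algebras.
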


\begin{proof}
This lemma follows immediately from the duality between quantum sets and hereditarily atomic von Neumann algebras \cite[Theorem~7.4]{zbMATH07287276} and the characterization of the inclusion function within this duality \cite[Lemma~8.3]{zbMATH07287276}.
\end{proof}

In other words, $\cat{cSet}$ is a coreflective subcategory of $\cat{qSet}$. Similarly, $\cat{cGph}$ is a coreflective subcategory of $\cat{qGph}$, as we now show.

\begin{definition}\label{U}
Let $G$ be a quantum graph. We define the \emph{classical part} of $G$ to be the quantum graph $\Cl(G)$ such that
\begin{enumerate}
\item $\V_{\Cl(G)} = \Cl(\V_G)$,
\item $E_{\Cl(G)} = J^\dagger \circ E_{G} \circ J$,
\end{enumerate}
 where $J\: V_{\Cl(G)} \to \V_G$ is the inclusion function \cite[Definition~8.2]{zbMATH07287276}.
\end{definition}

The inclusion function $J$ is immediately a homomorphism $\Cl(G) \to G$, and it satisfies a universal property that is analogous to \cref{T}.

\begin{lemma}\label{V}
Let $G$ be a graph, let $H$ be a quantum graph, and let $\Phi\: G \to H$ be a homomorphism. There exists a unique homomorphism $\Psi\: G \to \Cl(H)$ that makes the following diagram commute.
$$
\begin{tikzcd}
G
\arrow[dotted]{d}[swap]{\Psi}
\arrow{rd}{\Phi}
&
\\
\Cl(H)
\arrow{r}[swap]{J}
&
H
\end{tikzcd}
$$
Furthermore, $\Phi_1 \sim \Phi_2\: G \to H$ iff $\Psi_1 \sim \Psi_2\: G \to \Cl(H)$, where $J \circ \Psi_i = \Phi_i$.
\end{lemma}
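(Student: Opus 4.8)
The plan is to derive everything from the universal property of the inclusion function for quantum sets, \cref{T}, together with the single fact that the inclusion function $J\: \V_{\Cl(H)} \to \V_H$ is a function and hence satisfies $J^\dagger \circ J \geq \Id_{\V_{\Cl(H)}}$.

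First I would produce $\Psi$ as a function and settle uniqueness. Since $G$ is a graph, $\V_G$ is a classical quantum set, so \cref{T} applies to the function $\Phi\: \V_G \to \V_H$ and yields a unique function $\Psi\: \V_G \to \V_{\Cl(H)}$ with $J \circ \Psi = \Phi$; recall $\V_{\Cl(H)} = \Cl(\V_H)$ by \cref{U}. This already gives the uniqueness assertion of the lemma, since any homomorphism $\Psi'\: G \to \Cl(H)$ with $J \circ \Psi' = \Phi$ is in particular such a function. To see that $\Psi$ is itself a homomorphism I would check $\Psi \circ E_G \leq E_{\Cl(H)} \circ \Psi$ via the diagram chase
$$
\Psi \circ E_G \leq J^\dagger \circ J \circ \Psi \circ E_G = J^\dagger \circ \Phi \circ E_G \leq J^\dagger \circ E_H \circ \Phi = J^\dagger \circ E_H \circ J \circ \Psi = E_{\Cl(H)} \circ \Psi,
$$
where the first inequality uses $J^\dagger \circ J \geq \Id$ together with monotonicity of composition, the second uses that $\Phi$ is a homomorphism, and the last equality is the definition of $E_{\Cl(H)}$ in \cref{U}.

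For the remaining clause, recall from section~\ref{section 4} that $\Phi_1 \sim \Phi_2\: G \to H$ means $\Phi_1^\dagger \circ E_H \circ \Phi_2 \geq \Id_G$ and that $\Psi_1 \sim \Psi_2\: G \to \Cl(H)$ means $\Psi_1^\dagger \circ E_{\Cl(H)} \circ \Psi_2 \geq \Id_G$. Substituting $\Phi_i = J \circ \Psi_i$ and $E_{\Cl(H)} = J^\dagger \circ E_H \circ J$ gives the identity
$$
\Phi_1^\dagger \circ E_H \circ \Phi_2 = \Psi_1^\dagger \circ J^\dagger \circ E_H \circ J \circ \Psi_2 = \Psi_1^\dagger \circ E_{\Cl(H)} \circ \Psi_2,
$$
so the two adjacency conditions are literally the same inequality, and the equivalence is immediate.

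I do not expect any real obstacle: the whole argument is a short computation in $\cat{qRel}$. The only points that need a little attention are invoking \cref{T} with the correct classicality hypothesis on $\V_G$, and tracking the order-theoretic direction in the homomorphism check — where the inequality $J^\dagger \circ J \geq \Id$ available for any function is exactly what is needed, while the adjacency clause instead relies on the precise equality $E_{\Cl(H)} = J^\dagger \circ E_H \circ J$ rather than an inequality.
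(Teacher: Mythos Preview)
Your proof is correct and follows essentially the same approach as the paper: both invoke \cref{T} for existence and uniqueness of $\Psi$ as a function, then verify the homomorphism condition via a short computation using $E_{\Cl(H)} = J^\dagger \circ E_H \circ J$, and both prove the adjacency clause by the identical substitution $\Phi_1^\dagger \circ E_H \circ \Phi_2 = \Psi_1^\dagger \circ E_{\Cl(H)} \circ \Psi_2$. The only cosmetic difference is that the paper checks the equivalent condition $\Psi^\dagger \circ E_{\Cl(H)} \circ \Psi \geq E_G$ whereas you verify $\Psi \circ E_G \leq E_{\Cl(H)} \circ \Psi$ directly; both are one-line computations of the same flavor.
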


\begin{proof}
By \cref{T}, there exists a unique function $\Psi\: \V_G \to \V_H$ that makes the diagram
$$
\begin{tikzcd}
\V_G
\arrow[dotted]{d}[swap]{\Psi}
\arrow{rd}{\Phi}
&
\\
\V_{\Cl(H)}
\arrow{r}[swap]{J}
&
\V_H
\end{tikzcd}
$$
commute. The function $\Psi$ is a homomorphism because
$$
\Psi^\dagger \circ E_{\Cl(H)} \circ \Psi = \Psi^\dagger \circ J^\dagger \circ E_H \circ J \circ \Psi = \Phi^\dagger \circ E_H \circ \Phi  \geq E_G.
$$

Let $\Psi_1$ and $\Psi_2$ be homomorphisms $G \to \Cl(H)$, and let $\Phi_1$ and $\Phi_2$ be the corresponding homomorphisms $G \to H$. Observing that
$$
\Phi_1^\dagger \circ E_H \circ \Phi_2 = \Psi_1^\dagger \circ J^\dagger \circ E_H \circ J \circ \Psi_2 = \Psi_1^\dagger \circ E_{\Cl(H)} \circ \Psi_2^\dagger,
$$
we conclude that $\Phi_1 \sim \Phi_2$ iff $\Psi_1 \sim \Psi_2$.
\end{proof}

The categories $\cat{cGph}$ and $\cat{Gph}$ are equivalent, so we observe that $\cat{cGph}$ is a coreflective subcategory of $\cat{qGph}$.

\begin{lemma}\label{W}
There is a natural isomorphism $\Cl(G) \iso \Inc(\cat{qGph}(K_1, G))$.
\end{lemma}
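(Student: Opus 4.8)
The plan is to build the natural isomorphism atom by atom, identifying an ``element'' of $G$ — that is, a homomorphism $K_1 \to G$ — with the one-dimensional atom of $\V_G$ that it names, and then to check that this identification is compatible with both the edge relations and the morphisms of $\cat{qGph}$.

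First I would note that, since the edge relation of $K_1 = (\one, 0_{\one})$ is trivial, the condition $\Phi \circ 0_{\one} \leq E_G \circ \Phi$ is automatic, so a homomorphism $K_1 \to G$ is simply a function $\one \to \V_G$; thus $\cat{qGph}(K_1, G) = \cat{qSet}(\one, \V_G)$ as a set. Modeling $\V_G$ by $\At(\V_G)$ as in the proof of \cref{I}, I would then invoke the duality with hereditarily atomic von Neumann algebras: a function $\Phi\: \one \to \V_G$ corresponds to a normal character of $\ell^\infty(\V_G)$, which must factor through a single, necessarily one-dimensional, atom $X_\Phi \in \At(\V_G)$, so that concretely $\Phi(\CC, Y) = L(\CC, Y)$ when $Y = X_\Phi$ and $\Phi(\CC, Y) = 0$ otherwise. (Alternatively this follows by combining \cref{T} with the equivalence $\cat{cSet} \simeq \cat{Set}$.) The assignment $\Phi \mapsto X_\Phi$ is a bijection from $\cat{qGph}(K_1, G)$ onto $\At(\Cl(\V_G)) = \At(\V_{\Cl(G)})$, and since every atom in sight is one-dimensional, there is a unique function $\Theta_G\: \Inc(\cat{qGph}(K_1, G)) \to \Cl(G)$ realizing it, namely the relation with $\Theta_G(\CC_\Phi, Y) = L(\CC_\Phi, Y)$ for $Y = X_\Phi$ and $\Theta_G(\CC_\Phi, Y) = 0$ otherwise; it is a bijection in the sense of \cref{B}(2).

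The substance of the argument is to check that $\Theta_G$ intertwines the edge relations, which by \cref{B} is enough. On the domain side, atoms $\CC_{\Phi_1}$ and $\CC_{\Phi_2}$ are adjacent exactly when $\Phi_1 \sim \Phi_2$ in the graph $\cat{qGph}(K_1, G)$, i.e.\ when $\Phi_1^\dagger \circ E_G \circ \Phi_2 \geq \Id_{\one}$. Substituting the relations $\Phi_i$ described above and unwinding relational composition, this composite equals $L(X_{\Phi_1}, \CC)\cdot E_G(X_{\Phi_2}, X_{\Phi_1})\cdot L(\CC, X_{\Phi_2})$, which — all spaces involved being at most one-dimensional — is $L(\CC,\CC)$ when $E_G(X_{\Phi_1}, X_{\Phi_2}) \neq 0$ and $\{0\}$ otherwise. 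On the codomain side, $E_{\Cl(G)}(X_{\Phi_1}, X_{\Phi_2}) = (J^\dagger \circ E_G \circ J)(X_{\Phi_1}, X_{\Phi_2}) = E_G(X_{\Phi_1}, X_{\Phi_2})$, since $J(X, X) = \mathrm{span}(1_X) = L(X,X)$ for one-dimensional $X$; hence $X_{\Phi_1}$ and $X_{\Phi_2}$ are adjacent in the classical quantum graph $\Cl(G)$ iff $E_G(X_{\Phi_1}, X_{\Phi_2}) \neq 0$ as well. As $E_G^\dagger = E_G$ these conditions agree, so $\Theta_G$ is an isomorphism of quantum graphs.

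Finally I would verify naturality in $G$: for a homomorphism $\Psi\: G \to G'$, both $\Cl(\Psi)\circ\Theta_G$ and $\Theta_{G'}\circ\Inc(\cat{qGph}(K_1,\Psi))$ send $\CC_\Phi$ to the atom named by $\Psi\circ\Phi$ — the second because $\cat{qGph}(K_1,\Psi)$ is postcomposition and $\Theta_{G'}$ names atoms, the first because the defining relation $J_{G'}\circ\Cl(\Psi) = \Psi\circ J_G$ (\cref{V}, \cref{U}) forces $\Cl(\Psi)(X_\Phi) = X_{\Psi\circ\Phi}$. Since each side is the canonical, choice-free function realizing this bijection of one-dimensional atoms, they coincide. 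I do not expect a genuine obstacle here; the only point needing care is the bookkeeping in the edge computation — keeping the direction of relational composition straight and tracking the harmless scalar factors $1_X$ that arise from one-dimensional atoms — after which everything is a direct unwinding of the definitions of $\Cl$, $\Inc$, and the $\cat{Gph}$-enrichment.
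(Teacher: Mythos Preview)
Your proposal is correct and follows essentially the same approach as the paper: both arguments identify functions $\one \to \V_G$ with one-dimensional atoms of $\V_G$, verify by an atom-by-atom computation that the adjacency condition $\Phi_1^\dagger \circ E_G \circ \Phi_2 \geq \Id_{\one}$ matches $E_G(X_{\Phi_1}, X_{\Phi_2}) \neq 0$ (which in turn matches the edge relation of $\Cl(G)$), and then declare naturality routine. The only cosmetic difference is that you build the isomorphism in the direction $\Inc(\cat{qGph}(K_1,G)) \to \Cl(G)$ while the paper builds its inverse.
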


\begin{proof}
Let $G$ be a quantum graph.
By \cref{U}, each atom of $\V_{\Cl(G)}$ is a one-dimensional atom of $\V_G$. For each atom $X$ of $\V_{\Cl(G)}$, let $\hat X \:`\{\ast\} \to \V_G$ be the function that is defined by $\hat X(\CC, X) = L(\CC, X)$ with the other components vanishing. Let $$\Phi_G\:\V_{\Cl(G)} \to `\cat{qSet}(`\{\ast\}, G)$$ be the relation that is defined by $\Phi_G(X, \CC_{\hat X}) = L(X, \CC_{\hat X})$ with the other components vanishing. This is an invertible function by \cite[Proposition~4.4]{zbMATH07287276} because $\CC_{\hat X} \neq \CC_{\hat Y}$ whenever $X \neq Y$ and because every function $`\{\ast\} \to \V_G$ is of the form $\hat X$ for some one-dimensional atom $X$ of $\V_G$ by \cite[Theorem~7.4]{zbMATH07287276}.

For all atoms $X$ and $Y$ of $\Cl(G)$, we have that $(J^\dagger \circ E_G \circ J)(X, Y) = E_G(X,Y)$ and similarly $(\Phi_G^\dagger \circ `e \circ \Phi_G)(X,Y) = `e(\CC_{\hat X}, \CC_{\hat Y})$, where $e$ is the adjacency relation of the graph $\cat{qGph}(K_1, G)$. Hence, we may reason that
\begin{align*}
(J^\dagger & \circ E_G \circ J)(X,Y) = 0
\EV E_G(X, Y) = 0 \\ & \EV (\hat Y \circ E_G \circ \hat X)(\CC, \CC) = 0
\\ & \EV \hat Y \circ E_G \circ \hat X \not \geq \Id_{\mathbf 1}
 \EV \hat X \not \sim \hat Y \\ & \EV `e(\CC_{\hat X}, \CC_{\hat Y}) = 0 \EV (\Phi_G^\dagger  \circ `e \circ \Phi_G)(X,Y) = 0.
\end{align*}
Since $X$ and $Y$ are both one-dimensional, we find that $J^\dagger \circ E_G \circ J = \Phi_G^\dagger  \circ `e \circ \Phi_G$. Therefore, $\Phi_G \circ (J^\dagger \circ E_G \circ J) = \Phi_G \circ \Phi^\dagger_G  \circ `e \circ \Phi_G = `e \circ \Phi_G$, so $\Phi_G$ is an isomorphism by \cref{B}. The proof of its naturality is routine.
\end{proof}

\begin{theorem}\label{X}
The functors $\Inc\:\cat{Gph} \to \cat{qGph}$ and $\cat{qGph}(K_1, -) \: \cat{qGph} \to \cat{Gph}$ of \cref{G} are adjoint. This adjunction is enriched over $\cat{Gph}$ in the sense that the natural bijection
$$
\cat{Gph}(G, \cat{qGph}(K_1,H)) \iso \cat{qGph}(\Inc(G), H),
$$
where $G$ is a graph and $H$ is a quantum graph, is an isomorphism.
\end{theorem}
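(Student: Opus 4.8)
I would build the asserted natural isomorphism of graphs as a composite of three isomorphisms of graphs, so that essentially all the content is inherited from \cref{W} and \cref{V}. Conceptually, \cref{V} exhibits $\Cl$ as a right adjoint to the full inclusion $\cat{cGph}\hookrightarrow\cat{qGph}$, the functor $\Inc\:\cat{Gph}\to\cat{cGph}$ is an equivalence, and \cref{W} identifies the composite right adjoint $\cat{qGph}\xrightarrow{\Cl}\cat{cGph}\xrightarrow{\Inc^{-1}}\cat{Gph}$ with $\cat{qGph}(K_1,-)$; composing adjoints yields $\Inc\dashv\cat{qGph}(K_1,-)$. I would, however, spell this out as a chain of isomorphisms of hom-graphs in order to keep the $\cat{Gph}$-enrichment in view. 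Throughout I write $\Inc G$ instead of $G$.

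\textbf{Step 1: $\Inc$ is $\cat{Gph}$-enriched.} I would first check that for all graphs $G$ and $K$ the functor $\Inc$ induces an isomorphism of graphs $\cat{Gph}(G,K)\iso\cat{qGph}(\Inc G,\Inc K)$. Bijectivity on underlying sets is the full faithfulness of $\Inc$, inherited from that of the canonical inclusion $\cat{Rel}\to\cat{qRel}$ \cite[section~3]{zbMATH07287276}. For adjacency, one checks that the internal-hom graph structure on $\cat{Gph}(G,K)$ (namely $\phi\sim\psi$ iff $\phi(g)\sim\psi(g)$ for all $g$) coincides with the formula $\phi\sim\psi$ iff $\phi^\dagger\circ e_K\circ\psi\ge\id_G$ that defines the $\cat{Gph}$-enrichment of $\cat{qGph}$ in section~\ref{section 4}; since $\cat{Rel}\to\cat{qRel}$ preserves composition, the dagger, the order, and identities, $\Inc$ carries this formula to the corresponding one for $\cat{qGph}(\Inc G,\Inc K)$. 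Taking $K=\cat{qGph}(K_1,H)$ gives the isomorphism of graphs $\cat{Gph}(G,\cat{qGph}(K_1,H))\iso\cat{qGph}(\Inc G,\Inc(\cat{qGph}(K_1,H)))$.

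\textbf{Steps 2 and 3.} By \cref{W} there is a natural isomorphism $\Cl(H)\iso\Inc(\cat{qGph}(K_1,H))$ in $\cat{qGph}$; postcomposition with it and with its inverse gives an isomorphism of graphs $\cat{qGph}(\Inc G,\Inc(\cat{qGph}(K_1,H)))\iso\cat{qGph}(\Inc G,\Cl(H))$, since postcomposition in $\cat{qGph}$ respects adjacency (it is a homomorphism by the $\cat{Gph}$-enrichment, section~\ref{section 4}). Finally, $\Inc G$ is classical, so \cref{V} says that postcomposition with the inclusion homomorphism $J\:\Cl(H)\to H$ is a bijection $\cat{qGph}(\Inc G,\Cl(H))\to\cat{qGph}(\Inc G,H)$, and the ``furthermore'' clause of \cref{V} says this bijection and its inverse both respect adjacency; hence it is an isomorphism of graphs.

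\textbf{Assembling, and the main obstacle.} Composing Steps 1--3 gives an isomorphism of graphs $\cat{Gph}(G,\cat{qGph}(K_1,H))\iso\cat{qGph}(\Inc G,H)$. It is natural in $G$ and $H$ because each step is: Step 1 by functoriality of $\Inc$ (and of $\cat{qGph}(K_1,-)$), Step 2 by the naturality clause of \cref{W}, Step 3 because $J$ is natural in $H$ and postcomposition is functorial. Passing to underlying vertex sets turns this $\cat{Gph}$-enriched natural isomorphism into the ordinary natural bijection $\cat{Gph}(G,\cat{qGph}(K_1,H))\iso\cat{qGph}(\Inc G,H)$, which is exactly the statement that $\Inc$ is left adjoint to $\cat{qGph}(K_1,-)$, enriched over $\cat{Gph}$. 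I do not expect a genuine obstacle, since \cref{V} and \cref{W} already contain the substance; the only point requiring care is the adjacency bookkeeping — in particular the small verification in Step 1 that $\Inc$ is a $\cat{Gph}$-enriched functor, and checking at each step that one is comparing the hom-graphs with respect to the correct enrichment.
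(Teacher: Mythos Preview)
Your proposal is correct and follows essentially the same route as the paper's proof: the same three-step composite of isomorphisms of graphs, obtained from the full faithfulness of $\Inc$, \cref{W}, and \cref{V}, with the $\cat{Gph}$-enrichment tracked at each step. Your explicit verification in Step~1 that the adjacency formula $\phi\sim\psi \Leftrightarrow \phi^\dagger\circ e_K\circ\psi\ge\id_G$ is preserved by $\Inc$ is exactly what the paper summarizes as ``adjacency of graph homomorphisms is defined the same way in $\cat{Rel}$ and $\cat{qRel}$.''
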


\begin{proof}
We have natural isomorphisms
\begin{align*}
\cat{Gph}(G, \cat{qGph}(K_1,H))
& \iso
\cat{qGph}(\Inc(G), \Inc(\cat{qGph}(K_1,H)))
\\ & \iso
\cat{qGph}(\Inc(G), \Cl(H))
\\ & \iso
\cat{qGph}(\Inc(G), H).
\end{align*}
The first natural isomorphism is induced by the full and faithful symmetric monoidal functor $\Inc\: \cat{Rel} \to \cat{qRel}$ \cite[section~3]{zbMATH07287276}. It preserves adjacency because the adjacency of graph homomorphisms is defined the same way in $\cat{Rel}$ and $\cat{qRel}$. The second natural isomorphism comes from \cref{W}; $\cat{qGph}(\Inc(G),-)$ is a functor $\cat{qGph} \to \cat{Gph}$ because $\cat{qGph}$ is enriched over $\cat{Gph}$ (section~\ref{section 4}). The third natural isomorphism comes from \cref{V}.
Therefore, we obtain the claimed natural isomorphism as a composition of three natural isomorphisms. 
\end{proof}

\bibliographystyle{plain}
\bibliography{refs.bib}

\Addresses

\end{document}